\providecommand{\U}[1]{\protect\rule{.1in}{.1in}}
\newcommand{\BlackBoxes}{\global\overfullrule5pt}
\newcommand{\R}{\mathbb{R}}
\newcommand{\N}{\mathbb{N}}
\newcommand{\Eop}{\mathbb{E}}
\newcommand{\Pop}{\mathbb{P}}
\newcommand{\Q}{\mathbb{Q}}
\newtheorem{theorem}{Theorem}
\newtheorem{corollary}[theorem]{Corollary}
\newtheorem{lemma}[theorem]{Lemma}
\theoremstyle{definition}
\newtheorem{example}[theorem]{Example}
\newtheorem{remark}[theorem]{Remark}
\newtheorem{definition}[theorem]{Definition}
\numberwithin{equation}{section} \numberwithin{theorem}{section}
\def\0{\kern0pt\-\nobreak\hskip0pt\relax}
\def\makeoverbar#1#2#3#4#5#6#7{ \setbox0=\hbox{$\m@th#2\mkern#5mu{{}#3{}}\mkern#6mu$} \setbox1=\null \dimen@=#4\fontdimen8#13 \dimen@=3.5\dimen@
\advance\dimen@ by \ht0 \dimen@=-#7\dimen@ \advance\dimen@ by \wd0
\ht1=\ht0 \dp1=\dp0 \wd1=\dimen@
\dimen@=\fontdimen8#13 \fontdimen8#13=#4\fontdimen8#13
\rlap{\hbox to \wd0{$\m@th\hss#2{\overline{\box1}}\mkern#5mu$}}
\fontdimen8#13=\dimen@}
\def\mylabel#1#2{{\def\@currentlabel{#2}\label{#1}}}
\begin{document}
\title[Consistent upper price bounds for exotic options ]{Consistent upper price bounds for exotic options given a finite number of call prices and their convergence}
\author[N. \smash{B\"auerle}]{Nicole B\"auerle${}^*$}
\address[N. B\"auerle]{Department of Mathematics,
Karlsruhe Institute of Technology (KIT), D-76128 Karlsruhe, Germany}

\email{\href{mailto:nicole.baeuerle@kit.edu}
{nicole.baeuerle@kit.edu}}

\author[D. \smash{Schmithals}]{Daniel Schmithals${}^*$}
\address[D. Schmithals]{Department of Mathematics,
Karlsruhe Institute of Technology (KIT), D-76128 Karlsruhe, Germany}

\email{\href{daniel.schmithals@kit.edu} {daniel.schmithals@kit.edu}}

\thanks{${}^*$ Department of Mathematics,
Karlsruhe Institute of Technology (KIT), D-76128 Karlsruhe, Germany}
\begin{abstract}
We consider the problem of finding a consistent upper price bound for exotic options whose payoff depends on the stock price at two different predetermined time points (e.g. Asian option), given a finite number of observed call prices for these maturities. A model-free approach is used, only taking into account that the (discounted) stock price process is a martingale under the no-arbitrage condition. In case the payoff is directionally convex we obtain the worst case marginal pricing measures. The speed of convergence of the upper price bound is determined when the number of observed stock prices increases. We illustrate our findings with some numerical computations. 
\end{abstract}
\maketitle


\makeatletter \providecommand\@dotsep{5} \makeatother



\vspace{0.5cm}
\begin{minipage}{14cm}
{\small
\begin{description}
\item[\rm \textsc{ Key words} ]
{\small Martingale Optimal Transport; Directional Convexity; Convex Order; Asian Option}
\end{description}
}
\end{minipage}

\section{Introduction}
Given a finite number of observable call prices on the same stock for two different maturities $0<t_1<t_2$ and different strikes, what is an arbitrage-free upper price bound for an arbitrary option whose payoff is a function of the stock price at time $t_1$ and $t_2$? A typical example for such an option would be an Asian option. We study here a model-free setting, only relying on the assumption that the discounted stock price process is a martingale. In case the option's payoff function is directionally convex we show that the upper bound for the price is given by the optimal martingale transport between the marginal distributions which are obtained from  linearly interpolating the observable call prices. Moreover, it seems intuitively clear that when the number of observed call prices for different strikes increases that the so constructed upper bound converges against the true upper bound. We show this and also determine the best possible speed of convergence.

In case the marginal risk neutral distributions of the stock are completely known, the problem of finding an upper bound for the price of another derivative which is only a function of the two stock prices is know as martingale optimal transport problem (see e.g. \cite{blp13}). More precisely, the {\em martingale optimal transport problem} is the problem to maximize (minimize) 
\begin{equation}\label{eq:cQprob}
\int_{\R^2} c(x,y) \Q(dx,dy)
\end{equation}
under the constraints that the margins of $\Q$ are predefined distributions $\mu,\nu$, e.g. $\Q(dx,\R)=\mu(dx)$ and $\Q(\R,dy)=\nu(dy)$ and $\mu$ and $\nu$ are the distributions of a martingale, i.e.
\begin{equation}
\int_{\R} y\Q(x,dy)= x, \mbox{ for } \mu-a.e. \; x\in \R.
\end{equation}
According to the Lemma of Breeden and Litzenberger (see \cite{bl}) the risk neutral marginal distribution of a stock price at time $t_i$ can be obtained when call prices of all strikes for the maturity $t_i$ are observable. Assuming that the market is free of arbitrage this then leads to the martingale optimal transport problem (for more details, see \cite{blp13}). For special options, these kind of problems have already been discussed in \cite{h98} using Skorokhod embedding techniques. In case the payoff function $c$ satisfies certain properties (more precisely the Martingale Spence Mirrlees condition: $c_{xyy}\ge0$), it has been shown in \cite{hlt16} (see also \cite{bj16}) that the optimal martingale transport is a so-called left-monotone transport plan. In case the marginal distributions are discrete, the problem can be formulated as a linear programme.  In \cite{go17}, the authors approximate the general optimal martingale problem by a sequence of linear programmes, obtained by  discretization of the marginal distributions coupled with a suitable relaxation of the martingale constraint. The convergence rate of this approximation is also obtained.

In this paper however, we assume that there are only a finite number of call prices for both maturities $0<t_1<t_2$ observable which seems to be the realistic case. But then a whole family of risk neutral marginal distributions for both times points are available. In combination with the  martingale condition which is then the best upper price bound? I.e. we want to maximize \eqref{eq:cQprob} over all $\Q$ which satisfy the martingale condition and are consistent with observed call prices. Unlike in classical transport problems we cannot separate the margins from the copula since they are connected via the martingale condition.  However, we are able to answer the question when the  payoff function is directionally convex, i.e. convex in the components and supermodular on $\R^2$. In this case the worst margins are the ones which are obtained from the linearly interpolated call price function. 

In an older stream of literature, see e.g. \cite{r94,d94,dk98} researchers have calibrated discrete models to given option prices by constructing implied trees. The calibration is to a volatility surface obtained from interpolated market data. But the question of no arbitrage has not been discussed at that point in time directly. Moreover, in \cite{sst03} it has been shown that calibrating given models to data might lead to fairly different prices for exotic options.

On the other hand a lot of studies are concerned with price bounds for specific options. For example in \cite{hlw05} lower bounds for prices of basket options (with two stocks) are derived, given a finite number of observations of call prices of all stocks in the basket. The paper \cite{LW08} considers upper bounds for spread options which are basket options where the weights may have arbitrary sign. The authors use call price information of all strikes and special properties of the payoff function to derive the inequality.  Since only one time point is important in the payoff of basket options, the martingale property does not give further information in this case. 

There are also a number of papers which consider bounds and semi-static hedging strategies for Asian  options. Here the option payoff depends on the performance of one stock at different time points and the martingale property as a further information is highly relevant. Moreover, the payoff of an Asian option like
\begin{equation}
\Big( \frac12(S_{t_1}+S_{t_2})-k\Big)^+
\end{equation} 
as the average of the stock price at two time points, satisfies the directional convexity needed for our main result.  In \cite{cdcv08} options written on weighted sums of asset prices are considered. The study includes basket and Asian options. Upper bounds and super-replication strategies for these kind of options are derived in the case that all relevant call prices on the options are observed and in the case of a finite number of call prices are given. The convergence issue is also treated. The authors use comonotonicity arguments to construct the upper bound and generalize results in \cite{sim00}. This however is (in genral) in contradiction to the martingale property (see also Remark 2.3 in \cite{ams08}). I.e. respecting the martingale property should lead to tighter bounds. In \cite{ams08} lower bounds on Asian options are derived under some further assumptions on the expectation of the stock price process which are shown to be satisfied in L\'evy markets. The case of a finite number of observable call prices is also considered. A completely different approach is pursued in \cite{ck17} where an Asian option with  continuous stock price average is considered and bounds are derived using dynamic programming techniques. 

Our paper is organized as follows:  In the next section we summarize some facts about the relation between call prices and pricing measures and about the convex order. In Section \ref{sec:margin} we construct a special distribution from a finite number of observable call prices, show that is a maximal element with respect to the convex order in the set of all consistent pricing measures and determine its Wasserstein distance to other consistent pricing measures. Section \ref{sec:bounds} identifies the worst case margins, given the payoff function is directionally convex. The next section determines the best possible convergence rate of the upper bound, given the number of observable call prices increases. Section \ref{sec:numerics} provides some numerical results for the convergence and one example for the calculation of an upper price bound for an Asian option using real data. The appendix contains some longer proofs and an example showing that the speed of convergence is the optimal one in general.

\section{Preliminary Results }
Consider a financial market with one risk-free asset and one risky asset.
We consider only two future time points which we denote by $0<t_1<t_2$.  The risk-free asset has no interest and is normalized by $1$ and for the risky asset with price process $(S_t)$ we write $(S_{t_1},S_{t_2})=(X,Y)$ and assume $S_0 =1$. The random variables $X,Y$ are non-negative and defined on a probability space $(\Omega,\mathcal{F},\Pop)$. No model for the stock price is assumed, but throughout - and this is important - we assume that the market is free of arbitrage.

\subsection{Call Options and Pricing Measures}
Using the assumption of no-arbitrage  we may derive some properties of the price function $k \mapsto C_i(k)$, where $C_i(k)$ shall represent the price at time $t=0$ of a call option with strike price $k \in \R_+$ and maturity $t_i, i=1,2$ written on the stock.  More precisely $C_i$ should have the following properties:

  \begin{lemma}\label{Lem Call Prop 1}
    The mapping $C_i:\R_+ \to \R_+$, $k \mapsto C_i(k), i=1,2$ is  in an arbitrage free market
        \begin{enumerate}
        \item decreasing.
        \item convex.
      \item $\lim_{k\to \infty} C_i(k)=0$.
      \item $C'_i(0+) \geq -1$.
      \item $C_i(0)=S_0=1$.
    \end{enumerate}
  \end{lemma}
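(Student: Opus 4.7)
The plan is to verify the five properties one by one by standard no-arbitrage (replication) arguments, plus one use of the first fundamental theorem to pass to a risk-neutral representation when needed. The common pattern is that a pathwise inequality between the payoffs $k \mapsto (S_{t_i}-k)^+$ transfers to the corresponding inequality between the prices $C_i(k)$, since otherwise one could form a long/short call portfolio with strictly negative cost and non-negative terminal payoff, contradicting no arbitrage.

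Properties (a) and (b) follow from this pattern directly. For $k_1 \leq k_2$ we have $(S_{t_i}-k_1)^+ \geq (S_{t_i}-k_2)^+$ pointwise, so buying the strike-$k_1$ call and selling the strike-$k_2$ call is an arbitrage unless $C_i(k_1) \geq C_i(k_2)$. Similarly, for $\lambda\in[0,1]$ the map $k\mapsto (S_{t_i}-k)^+$ is convex, and buying $\lambda$ strike-$k_1$ calls plus $(1-\lambda)$ strike-$k_2$ calls dominates a strike-$(\lambda k_1+(1-\lambda)k_2)$ call pathwise, giving convexity of $C_i$. Property (e) is the buy-and-hold identity: the call with strike $0$ pays $S_{t_i}$, whose time-$0$ fair price equals $S_0=1$.

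For (d) I would use the static lower bound $C_i(k)\geq (S_0-k)^+$, which follows from put-call parity or directly: a portfolio consisting of one call, a short stock, and $k$ in cash has time-$t_i$ payoff $(k-S_{t_i})^+\geq 0$, so its cost $C_i(k)-S_0+k$ must be non-negative. Combined with (e) this yields $(C_i(k)-C_i(0))/k \geq -1$ for $k>0$, and since the right-derivative of the convex function $C_i$ at $0$ exists, letting $k\downarrow 0$ gives $C_i'(0+)\geq -1$.

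The only slightly delicate point is (c), which I expect to be the main obstacle. Monotonicity plus non-negativity give a limit $L\geq 0$, but to identify $L=0$ a model-free pathwise argument seems insufficient. I would appeal here to the first fundamental theorem of asset pricing: no arbitrage implies the existence of a pricing measure $\mathbb{Q}$ under which $C_i(k)=\mathbb{E}_{\mathbb{Q}}[(S_{t_i}-k)^+]$, and since $(S_{t_i}-k)^+\leq S_{t_i}$ with $\mathbb{E}_{\mathbb{Q}}[S_{t_i}]=S_0=1<\infty$, dominated convergence yields $C_i(k)\to 0$ as $k\to\infty$. This is also the proof step that uses the standing assumption that the market is arbitrage-free in its strongest form.
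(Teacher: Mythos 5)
Your proposal is correct, and it follows essentially the route the paper relies on: the paper omits the proof and cites Davis--Hobson (2007) and Schmithals (2018), where properties (a), (b), (d), (e) are obtained by exactly the static dominance/replication arguments you give (including $C_i(k)\geq (S_0-k)^+$ together with $C_i(0)=S_0$ and convexity for the slope bound). Your treatment of (c) is also the intended one and is consistent with the paper's own framework, which invokes the first fundamental theorem to write $C_i(k)=\int (x-k)^+\,\mu(dx)$ with $\int x\,\mu(dx)=S_0=1$, so that monotone (or dominated) convergence gives $C_i(k)\to 0$; your remark that a purely pathwise static argument cannot yield (c) is well taken.
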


A proof can e.g. be found in \cite{dh07} or \cite{schm18} Lemma 3.2 and Lemma 3.4 or in \cite{dh07}, Theorem 9.1. 

\begin{definition}
    \label{Def Candidate}  
    A function $C:\R_{+} \to \R_{+}$ is called a \textit{candidate function for call option prices}, if it satisfies conditions (a)-(e) from Lemma \ref{Lem Call Prop 1}.
  \end{definition}
  
The no-arbitrage condition implies by virtue of the first fundamental theorem of asset pricing in a general market the existence of a pricing measure.  More precisely, let us denote by $P(\R_+)$ the set of all probability measures on $\R_+$. Then there exists a measure $\mu\in P(\R_+)$ such that 
 \begin{equation}\label{eq:Cfunc}
C(k) = \int (x-k)^+ \mu(dx), \quad k\ge 0.
\end{equation}   In this case we say that $\mu$ and $C$ are {\em consistent.}   
The lemma of Breeden and Litzenberger \cite{bl} states that in this situation
\begin{equation}\label{eq:BL} \mu((-\infty,x]) = 1+C'(x+), \quad x\in\R.\end{equation}
That means when we know call prices for all strikes $k>0$ (on the same stock, with same maturity), we can derive the pricing measure $\mu$ by \eqref{eq:BL}.
Note that contrary if $\mu\in P(\R_+)$ is given, the function $C$ defined in \eqref{eq:Cfunc} is automatically a candidate function if $\int x\mu(dx)=S_0=1.$

\subsection{Convex Order}  
Another important tool that we need is the so-called convex order.
 \begin{definition}\label{Def Convex Order}
    Two  measures $\mu,\nu $ on $\R$  are said to be in \textit{convex order}, denoted by $\mu\leq_{c}\nu$, if for any convex function $f:\R \to \R$ such that the integrals exist,
    \[
      \int_{\R} f(x) \mu( d x) \leq \int_{\R} f(x) \nu( d x).
    \]
      \end{definition}
Since both $f(x)=x$ and $f(x)=-x$ are convex as well as $f(x)=1$ and $f(x)=-1$, the property $\mu\leq_c\nu$ implies that $\int x\mu(dx)=\int x\nu(dx)$ and $\mu(\R)=\nu(\R)$. The next result follows from \cite{str}:

\begin{lemma}\label{lem:cx1}
Suppose $\mu,\nu \in P(\R_+).$ Then $\mu\le_c \nu$ is equivalent to the existence of a probability space $(\Omega,\mathcal{F},\Pop)$ and non-negative random variables $X,Y$ on it such that $X$ has distribution $\mu$ and $Y$ has distribution $\nu$ and $X = \Eop[Y|X].$ 
\end{lemma}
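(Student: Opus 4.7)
My proof plan is to split Lemma \ref{lem:cx1} into its two implications. The easier direction assumes $X \sim \mu$, $Y \sim \nu$, and $X = \Eop[Y \mid X]$. For any convex $f$ whose integrals against $\mu$ and $\nu$ exist, the conditional Jensen inequality gives
\[
\int f \, d\mu = \Eop[f(X)] = \Eop\!\left[f\bigl(\Eop[Y\mid X]\bigr)\right] \le \Eop\!\left[\Eop[f(Y)\mid X]\right] = \int f \, d\nu,
\]
so $\mu \le_c \nu$. This is essentially the only computation I would carry out in full detail; integrability of $f(Y)$ is given by hypothesis.

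The nontrivial direction is Strassen's theorem, and in the paper I would simply cite \cite{str}. To outline its proof in the present setting: the set $\Pi(\mu,\nu)$ of couplings of $\mu$ and $\nu$ on $\R_+^2$ is convex and weakly compact (marginals are tight). The goal is to find $\pi \in \Pi(\mu,\nu)$ annihilating every function of the form $(x,y) \mapsto g(x)(y-x)$ with $g \in C_b(\R_+)$. The two coordinate maps under such a $\pi$, disintegrated via regular conditional distributions of $Y$ given $X$, then provide the required random variables on the probability space $(\R_+^2, \mathcal{B}(\R_+^2), \pi)$.

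The main obstacle is the existence of this martingale-respecting coupling. The standard approach is a Hahn--Banach separation argument: if no such $\pi$ existed, the martingale constraints would be simultaneously unsatisfiable on the convex compact set $\Pi(\mu,\nu)$, yielding $g \in C_b(\R_+)$, $\varepsilon > 0$, and (via Kantorovich duality) functions $\phi, \psi$ with $g(x)(y-x) \le \phi(x)+\psi(y)-\varepsilon$ pointwise, while $\int \phi \, d\mu + \int \psi \, d\nu \le 0$. Replacing $\phi$ by a suitable convex envelope built from $\psi$ would force $\phi$ convex, and the resulting inequality $\int \phi \, d\nu > \int \phi \, d\mu$ would contradict $\mu \le_c \nu$. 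The delicate point is integrability when supports are unbounded, handled by truncation combined with the equality of means $\int x \, d\mu = \int x \, d\nu$ implied by $\mu \le_c \nu$.
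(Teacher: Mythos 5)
Your proposal is correct and takes essentially the same route as the paper, which offers no proof at all and simply attributes the result to \cite{str}; your conditional Jensen computation for the easy direction is standard and sound, and deferring the converse to Strassen's theorem (with your duality sketch as optional background) is exactly what the authors do.
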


The next lemma follows from Theorem 1.5.3 and Theorem 1.5.7 in \cite{ms}. 

 \begin{lemma}\label{lem:cx2}
    Let $\mu, \nu\in P(\R_+) $  and denote by $C_\mu$ and $C_\nu$ the respective  consistent pricing functions. Suppose that $\int x\mu(dx)=\int x\nu(dx)=1$. Then $\mu \leq_c  \nu$ is  equivalent to $C_{\mu} \leq  C_{\nu}$.
\end{lemma}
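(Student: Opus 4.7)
The plan is to handle the two implications separately, treating the forward direction as essentially tautological and the reverse direction as the substantive content. For the implication $\mu\leq_c\nu \Rightarrow C_\mu \leq C_\nu$, I simply apply the definition of convex order with the test function $f(x)=(x-k)^+$: this function is convex on $\R_+$, so $C_\mu(k)=\int (x-k)^+ \mu(dx)\le \int (x-k)^+ \nu(dx)=C_\nu(k)$ for every $k\ge 0$. The integrability is automatic because $\mu,\nu\in P(\R_+)$ have finite mean, so both integrals lie in $[0,1]$.

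For the converse, the idea is to represent an arbitrary convex function $f:\R_+\to\R$ as an affine part plus a non-negative superposition of the kernels $x\mapsto(x-k)^+$. Concretely, if $f$ is convex with distributional second derivative $f''$, which is a non-negative Borel measure on $\R_+$, one has the identity
\begin{equation*}
f(x)=f(0)+f'_+(0)\,x+\int_0^\infty (x-k)^+ f''(dk),\qquad x\ge 0,
\end{equation*}
obtained by writing $f(x)-f(0)=\int_0^x f'_+(t)\,dt$ and substituting $f'_+(t)=f'_+(0)+f''([0,t])$. Integrating this against $\mu$ and applying Fubini (which is legal because the integrand is non-negative) gives
\begin{equation*}
\int_{\R_+} f(x)\,\mu(dx)=f(0)+f'_+(0)\cdot 1+\int_0^\infty C_\mu(k)\,f''(dk),
\end{equation*}
and similarly for $\nu$, where I used the hypothesis $\int x\,\mu(dx)=\int x\,\nu(dx)=1$. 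Subtracting the two expressions, the affine parts cancel, leaving $\int f\,d\nu-\int f\,d\mu=\int_0^\infty\bigl(C_\nu(k)-C_\mu(k)\bigr)f''(dk)\ge 0$ since $C_\mu\le C_\nu$ and $f''\ge 0$.

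The main obstacle is justifying the representation formula for a \emph{general} convex test function and handling the integrability needed in the definition of convex order. If $f$ is unbounded or not $C^2$, one has to be careful that $\int f\,d\mu$ and $\int f\,d\nu$ actually exist and are finite, and that the representation above holds in the appropriate sense. I would handle this by first verifying the formula for smooth convex $f$ with compactly supported $f''$ (so that both sides are obviously finite), then approximating a general convex $f$ from below by such functions using monotone convergence, and finally reducing the general convex-order inequality to the non-negative convex case by subtracting an affine function (which does not affect the inequality because $\mu$ and $\nu$ share the same mean and total mass). Alternatively, one can simply invoke Theorem 1.5.3 and Theorem 1.5.7 of \cite{ms} once the chain of representations is laid out, since these results establish precisely that the class of stop-loss transforms $\{(x-k)^+:k\ge 0\}$, together with matching means, is a determining class for the convex order on probability measures on $\R_+$ with finite mean.
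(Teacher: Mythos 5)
Your proposal is correct, but it takes a more self-contained route than the paper, which disposes of this lemma with a one-line citation: the statement ``follows from Theorem 1.5.3 and Theorem 1.5.7 in \cite{ms}'', i.e.\ from the standard facts that stop-loss (call) transforms together with equal means characterize the convex order on measures with finite mean. What you do is essentially reprove those two theorems: the forward implication by testing with the convex functions $x\mapsto(x-k)^+$, and the converse via the representation $f(x)=f(0)+f'_+(0)\,x+\int_0^\infty (x-k)^+\,f''(dk)$, Fubini, and cancellation of the affine part using the common mean $1$ and common total mass. This buys a transparent, elementary argument at the cost of the technical bookkeeping you already flag: the approximation of a general convex $f$ (with existing integrals) by convex functions whose second-derivative measure is compactly supported, monotone convergence after subtracting an affine minorant, and so on — all standard and fixable. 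One small point to tidy: the right derivative satisfies $f'_+(t)=f'_+(0)+f''\big((0,t]\big)$ with a half-open interval, so writing $f''([0,t])$ would double-count a possible atom of the second-derivative measure at $0$; this only perturbs the linear term and is harmless, but the formula should be stated with $(0,t]$. With that adjustment, your argument is a valid direct proof of exactly what the paper imports from \cite{ms}, so both routes lead to the same place; the paper's choice is shorter, yours is self-contained.
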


\section{Specially Designed Marginals}\label{sec:margin}
The lemma  of Breeden and Litzenberger (see \cite{bl}) implies that when we can observe call prices for a fixed maturity  for all strikes $k>0$, we are able to determine the pricing measure which in turn defines prices of other European options with the same maturity.
In reality  however, there are for a fixed maturity $t_i$ only a finite number of call prices $c_0^i>\ldots >c_{n_i}^i>0$ available for strikes $0\le k_0^i<\ldots <k_{n_i}^i, n_i\in\N, i=1,2$ . Thus we define
  
  \begin{definition}    \label{Def Candidate2}  
    Let  for $i=1,2$:
    $$ P_i:= \Big\{ \mu\in P(\R_+) : c_j^i= \int (x-k_j^i)^+ \mu(dx), j=0,\ldots ,n_i, \int x\mu(dx)=1\Big\}$$
    be the set of all pricing measures which are consistent with the observable call prices having maturity $t_i$.
  \end{definition}
  
In what follows we assume that there is a strike price $K>0$ such that calls have price zero for strikes larger or equal than $K$, i.e. the pricing measures are concentrated on the compact interval $[0,K]$. Moreover we assume that call prices $c_0^i>\ldots >c_{n_i}^i=0$ are available for strikes $0= k_0^i<\ldots <k_{n_i}^i$ where  by our assumption $c_{0}^i=1$. We denote by $S_i := \{k_0^i < \ldots <k_{n_i}^i \}, i=1,2$ the set of strike prices for which call prices are observable. We choose the  functions $C^*_\mu, C^*_\nu$  to be exactly the functions that result from interpolating between the observed call option prices $(c_j^1)$ and $(c_j^2)$ respectively (see Figure \ref{fig:cstar}). We concentrate the discussion in this section on $C^*_\mu$ and in order to ease notation write $k_j$ instead of $k_j^1$.

That is, for $C^*_\mu(k_j)=c_j, j=0,\ldots,n_1$ and  for $k \in [k_{j},k_{j+1}), j=0,\ldots, n,$ we define
  \[
    C^*_\mu(k) := \frac{k_{j+1}-k}{k_{j+1}-k_j}C^*_\mu(k_j) + \frac{k-k_j}{k_{j+1}-k_j}C^*_\mu(k_{j+1}).
  \]
  
     \begin{figure}[!htbp]\begin{center}
       \begin{tikzpicture}[scale=1]      
            \draw(0,0)--(9,0);
             \draw(1,5)--(2,4.1);
       \draw(2,4.1)--(3,3.4);
       \draw(3,3.4)--(4,2.8);
           \draw(4,2.8)--(5,2.4);  
           \draw(5,2.4)--(6,2.2);
     \draw(6,2.2)--(7,2.1);   \draw(7,2.1)--(8,2.05);

    \draw(1,0.1)--(1,-0.1);
      \draw(2,0.1)--(2,-0.1);
        \draw(3,0.1)--(3,-0.1);
          \draw(4,0.1)--(4,-0.1);
            \draw(5,0.1)--(5,-0.1);
              \draw(6,0.1)--(6,-0.1);
                \draw(7,0.1)--(7,-0.1);

             \draw[dashed](1,5)--(1,0);
       \draw[dashed](2,4.1)--(2,0);
       \draw[dashed](3,3.4)--(3,0);
           \draw[dashed](4,2.8)--(4,0);  
           \draw[dashed](5,2.4)--(5,0);
     \draw[dashed](6,2.2)--(6,0);   \draw[dashed](7,2.1)--(7,0);
     \draw[dashed](8,2.05)--(8,0);
                
           \node  at (1,-0.5) {$k_0^i$};
               \node  at (2,-0.5) {$k_1^i$};
                 \node  at (3,-0.5) {$k_2^i$};
                   \node  at (4,-0.5) {$k_3^i$};
                     \node  at (5,-0.5) {$k_4^i$};
                       \node  at (6,-0.5) {$k_5^i$};
                         \node  at (7,-0.5) {$k_6^i$};
                           \node  at (8,-0.5) {$k_7^i$};   
         
          \draw(0,-0)--(0,5.1);           
               \node  at (-0.5,5) {$c_0^i$};
               \node  at (-0.5,4.1) {$c_1^i$};
                 \node  at (-0.5,3.4) {$c_2^i$};
                   \node  at (-0.5,2.8) {$c_3^i$};
                     \node  at (-0.5,2.4) {$\vdots$};
         
        \end{tikzpicture}
        \caption{$C_\mu^*, C_\nu^*$ interpolate the observed call prices. }
         \label{fig:cstar}
  \end{center}
 
    \end{figure}
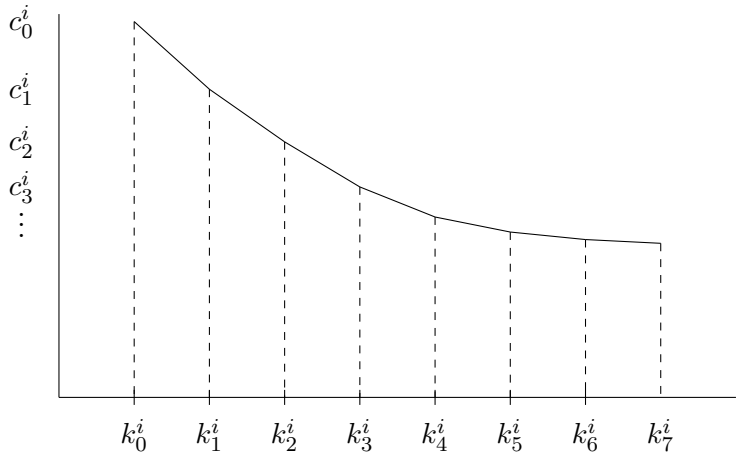

We assume that the observed prices are arbitrage-free and thus $C^*_\mu$ is a candidate function according to Lemma \ref{Lem Call Prop 1} (see Theorem 3.1 in \cite{dh07}).

\begin{lemma}\label{lem:mu*}
  The  measure $\mu^*$ consistent with $C^*_\mu$ is a discrete measure of the form 
  \begin{align*}
    \mu^* :=  \sum_{j=0}^{n}\omega_j\delta_{k_{j}} & := \sum_{j=0}^{n}\left[ \frac{C^*_\mu(k_{j+1})-C^*_\mu(k_{j})}{k_{j+1}-k_{j}} - \frac{C^*_\mu(k_{j})-C^*_\mu(k_{j-1})}{k_{j}-k_{j-1}}\right]\delta_{k_{j}}
  \end{align*}
  where we set $\frac{C^*_\mu(k_{n+1})-C^*_\mu(k_{n})}{k_{n+1}-k_{n}}= 0$ and $\frac{C^*_\mu(k_{0})- C^*_\mu(k_{-1})}{k_{0}-k_{-1}}= -1$ and $\delta_x$ is the Dirac measure on point $x$. 
\end{lemma}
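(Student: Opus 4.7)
The plan is to apply the Breeden--Litzenberger identity \eqref{eq:BL} to the piecewise linear function $C^*_\mu$. On each interval $[k_j,k_{j+1})$ the function $C^*_\mu$ is affine, so its right derivative is the constant slope
\[
m_j \;:=\; \frac{C^*_\mu(k_{j+1})-C^*_\mu(k_j)}{k_{j+1}-k_j},
\]
and consequently the mapping $x\mapsto 1+C^{*\prime}_\mu(x+)$ is a step function that only jumps at the knots $k_0,\ldots,k_n$. Since \eqref{eq:BL} identifies this mapping with the cumulative distribution function $F^*(x)=\mu^*((-\infty,x])$, it follows at once that $\mu^*$ is a discrete probability measure supported on $\{k_0,\ldots,k_n\}$, so only its point masses remain to be determined.

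For each $j$ the weight is the jump of $F^*$ at $k_j$, i.e.\ $\omega_j=F^*(k_j)-F^*(k_j-)=m_j-m_{j-1}$, which is exactly the quantity displayed in the statement. The two boundary conventions encode the endpoint behaviour of $C^*_\mu$: the convention $m_{-1}=-1$ expresses the requirement that $F^*(k_0-)=0$, equivalently that $C^{*\prime}_\mu(0-)=-1$, which is forced by total mass $1$ together with $\mu^*$ being supported in $\R_+$; the convention $m_n=0$ encodes that $C^*_\mu\equiv 0$ on $[k_n,\infty)$ (since $c_n=0$ by assumption), so that $C^{*\prime}_\mu(k_n+)=0$.

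The main obstacle is only the bookkeeping around these two endpoints. Once it is settled, a short sanity check closes the argument: the telescoping sum $\sum_{j=0}^{n}\omega_j = m_n - m_{-1}=0-(-1)=1$ shows that $\mu^*$ is a probability measure, and a summation by parts gives $\sum_{j=0}^n k_j\omega_j = -\sum_{j=0}^{n-1} m_j(k_{j+1}-k_j)+k_0\cdot 1 = C^*_\mu(k_0)=1$, so the martingale normalisation $\int x\,\mu^*(dx)=1=S_0$ is satisfied as well. Uniqueness of $\mu^*$ consistent with $C^*_\mu$ is automatic from \eqref{eq:BL}, so the explicit formula obtained this way is the one claimed.
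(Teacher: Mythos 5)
Your argument is correct, but it runs in the opposite direction to the paper's. The paper proves the lemma by taking the displayed discrete measure as given and verifying directly (a routine telescoping computation, deferred to \cite{schm18}, Appendix A.4) that $\int (x-k)^+\mu^*(dx)=C^*_\mu(k)$ for all $k$, i.e.\ it establishes consistency constructively. You instead start from a measure consistent with $C^*_\mu$ and identify it via the Breeden--Litzenberger formula \eqref{eq:BL}: since $C^*_\mu$ is piecewise affine, $1+C^{*\prime}_\mu(\cdot+)$ is a step function jumping only at the knots, so the measure must be discrete with weights equal to the slope differences $m_j-m_{j-1}$, the boundary conventions $m_{-1}=-1$ and $m_n=0$ reflecting $F^*\equiv 0$ below $k_0=0$ and $C^*_\mu\equiv 0$ beyond $k_n$. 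This is a legitimate alternative and buys uniqueness and a transparent interpretation of the two conventions; its only (mild) dependence is on the existence of a consistent measure, which the lemma's phrasing presupposes and which the paper has already secured by assuming the observed prices are arbitrage-free so that $C^*_\mu$ is a candidate function (Theorem 3.1 in \cite{dh07}). The paper's direct verification does not need that input, since it exhibits the consistent measure explicitly; if you wanted your version to be fully self-contained you would add the same one-line telescoping check that $\int (x-k)^+\mu^*(dx)=C^*_\mu(k)$, of which your mass and mean computations are the special cases $k\ge k_n$ and $k=0$.
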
  

\begin{proof}
It is a routine calculation to see that $C^*_\mu(k)= \int_k^K (x-k)^+\mu^*(dx)$. A detailed calculation can be found in \cite{schm18} Appendix A.4. Note that this construction has also been used in \cite{cdcv08}.
\end{proof}

By construction the measure $\mu^*$ has a special property.

\begin{lemma}\label{lem:mu*extremal}
Suppose  that $\mu\in P_1$, i.e. $\mu$ is another probability measure consistent with the observable call prices in $P_1$.    Then $$ \mu\le_c \mu^*.$$ 
\end{lemma}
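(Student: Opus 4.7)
The plan is to reduce the convex-order comparison to a pointwise inequality between the associated call price functions, and then exploit convexity of any candidate function against the piecewise-linear interpolant $C^*_\mu$.

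First, I would note that any $\mu\in P_1$ and $\mu^*$ both satisfy $\int x\,\mu(dx)=\int x\,\mu^*(dx)=1$, so Lemma \ref{lem:cx2} applies and $\mu\le_c \mu^*$ is equivalent to the pointwise inequality $C_\mu(k)\le C^*_\mu(k)$ for every $k\ge 0$, where $C_\mu$ is the candidate function associated to $\mu$ via $C_\mu(k)=\int(x-k)^+\mu(dx)$. The whole argument therefore boils down to verifying this pointwise inequality.

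Next, since $\mu\in P_1$, we have $C_\mu(k_j^1)=c_j^1=C^*_\mu(k_j^1)$ at every observed strike $k_0^1<\dots<k_{n_1}^1$. On each subinterval $[k_j^1,k_{j+1}^1]$, $C^*_\mu$ is by definition the affine function connecting the two endpoint values, whereas $C_\mu$ is convex by Lemma \ref{Lem Call Prop 1}. A convex function lies on or below any of its chords, so $C_\mu(k)\le C^*_\mu(k)$ on $[k_0^1,k_{n_1}^1]$. For $k\ge k_{n_1}^1$ the tail condition $c_{n_1}^1=0$, together with non-negativity and monotonicity of any candidate function, forces $C_\mu(k)=0=C^*_\mu(k)$; and $k_0^1=0$ so there is nothing to check to the left. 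Hence $C_\mu\le C^*_\mu$ on all of $\R_+$.

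Combining the two steps and applying Lemma \ref{lem:cx2} in the direction $C_\mu\le C^*_\mu \Rightarrow \mu\le_c \mu^*$ completes the proof. There is no genuine obstacle here: the only care needed is to handle the tail $k\ge k_{n_1}^1$ correctly (using $c_{n_1}^1=0$ so that $C_\mu$ and $C^*_\mu$ both vanish there) and to remember that $\mu^*\in P_1$ by Lemma \ref{lem:mu*}, which legitimizes applying Lemma \ref{lem:cx2} to the pair $(\mu,\mu^*)$.
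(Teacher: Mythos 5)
Your proposal is correct and follows essentially the same route as the paper: reduce the convex-order claim to the pointwise inequality $C_\mu\le C^*_\mu$ via Lemma \ref{lem:cx2} (using equal means), and obtain that inequality from the fact that the convex function $C_\mu$ passes through the points $(k_j^1,c_j^1)$ and hence lies below the piecewise-linear interpolant. Your explicit handling of the tail $k\ge k_{n_1}^1$ and of $k_0^1=0$ is just a more detailed spelling-out of what the paper subsumes under ``convexity implies $C_\mu\le C_{\mu^*}$ on $\R_+$.''
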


\begin{proof}
Let $C_{\mu}$ be consistent with $\mu$, i.e. it is by definition convex and passes through the points $(k_j,c_j).$ Moreover, $\int x\mu(dx)=\int x\mu^*(dx)=1$. But the convexity implies that $C_{\mu}\le C_{\mu^*}$ on $\R_+$. 
This in turn implies by Lemma \ref{lem:cx2} that $\mu\le_c \mu^*.$
\end{proof}

Hence with respect to  convex order $\mu^*$ is the maximal element of the set $P_1$. It is also possible to determine the Wasserstein distance of $\mu^*$ to any other element of $P_1$ explicitly. The Wasserstein distance between two probability measures is defined as follows. Let 
 \begin{align*}
    \Pi&(\mu,\mu^*) :=     \big\{ \pi \in P(\R^2)  : \mu(B)=\pi(B\times \R),\; \mu^*(C)= \pi(\R\times C),  B,C\in\mathcal{B}(\R)\big\}
  \end{align*}
  Then we define
\begin{definition}
The \textit{Wasserstein distance} of two probability measures $\mu,\mu^* \in {P}(\R)$ is given by
  \begin{equation}\label{Eq Def Wasserstein}
    W(\mu,\mu^*):= \inf_{\pi \in \Pi(\mu,\mu^*)} \int | x-y| \pi(dx,dy).
  \end{equation}
\end{definition}

\begin{remark}\label{rem:W}
\begin{itemize}
\item[(a)] If $F_{\mu}$ and $F_{\mu^*}$ are the cumulative distribution functions of   $\mu$ and $\mu^*$, it also holds that (see \cite{da}, Sec. 1)
 \[
        W(\mu,\mu^*) =\int_{-\infty}^{\infty} |F_{\mu}(t)-F_{\mu^*}(t)| d t.
    \]
\item[(b)]   There is also a dual representation of the Wasserstein distance:
 \[
        W(\mu,\mu^*) =\sup_{f\in C_1(\R)} \int f(x) (\mu-\mu^*)(dx),
    \]
    where $C_1(\R):= \{f:\R\to \R : f \mbox{ is Lipschitz-continuous with constant } 1\}$, see e.g. \cite{villani} Theorem 6.9.
\end{itemize}

\end{remark}    
Now we can show that

 \begin{theorem}\label{Thm Wasserstein Distance Estimates}
    Let $\mu \in P(\R_+)$ with $supp(\mu) \subset [0,K]$. Moreover choose $k_j = \frac{jK}{2^n}$, $j=0,\ldots, 2^n$,  $n \in \N$. Then we have
    \begin{align}\label{Thm Eq 1 Wasserstein Distance Estimates}
      W( \mu,\mu^*) = 2\cdot \sum_{j=0}^{2^n-1} \sup_{k\in [k_j,k_{j+1})} | C_{\mu^{*}}(k)-C_{\mu}(k)| \leq \frac{K}{2^n},
    \end{align}
    If we additionally assume that $C_{\mu} \in C^2(\R_{+})$, then, for any $n \in \N$, we have
    \begin{align}\label{Thm Eq 3 Wasserstein Distance Estimates}
      W( \mu,\mu^{*})  \leq \frac{T_{\mu}\cdot K^2 }{2^{n+1}},
       \end{align}
       where $T_\mu:= \sup\limits_{\kappa \in [0,K]}|C_{\mu}''(\kappa)|$.
  \end{theorem}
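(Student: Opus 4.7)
The plan is to convert the Wasserstein distance into an $L^1$-distance between the CDFs via Remark \ref{rem:W}(a), then into an $L^1$-distance between the right derivatives of the two call price functions via Breeden-Litzenberger, and finally exploit the elementary geometry of a convex function and its chord on each subinterval. Concretely, using $F_{\mu}(t)=1+C'_{\mu}(t+)$ and the analogue for $\mu^*$, I would first write
\[
W(\mu,\mu^*) \;=\; \int_{0}^{K}|C'_{\mu}(t+)-C'_{\mu^*}(t+)|\,dt \;=\; \sum_{j=0}^{2^n-1}\int_{k_j}^{k_{j+1}}|C'_{\mu}(t+)-C'_{\mu^*}(t+)|\,dt,
\]
the restriction to $[0,K]$ coming from both measures being supported there.

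For the equality in \eqref{Thm Eq 1 Wasserstein Distance Estimates}, I would fix $j$ and set $g_j:=C_{\mu^*}-C_{\mu}$ on $[k_j,k_{j+1}]$. Since $C_{\mu^*}$ is the chord of $C_{\mu}$ across this interval, $g_j$ vanishes at both endpoints; convexity of $C_{\mu}$ together with linearity of $C_{\mu^*}$ makes $g_j$ concave and nonnegative there. Let $t_j^{*}$ be a maximizer of $g_j$ and set $M_j:=g_j(t_j^{*})$. Because $g_j$ is concave, its right derivative $g_j'(\cdot+)=C'_{\mu^*}(\cdot+)-C'_{\mu}(\cdot+)$ is nonincreasing, nonnegative on $[k_j,t_j^{*})$ and nonpositive on $[t_j^{*},k_{j+1})$. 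Absolute continuity of concave functions then gives
\[
\int_{k_j}^{k_{j+1}}|g_j'(t+)|\,dt \;=\; \bigl(g_j(t_j^{*})-g_j(k_j)\bigr) + \bigl(g_j(t_j^{*})-g_j(k_{j+1})\bigr) \;=\; 2M_j \;=\; 2\sup_{k\in[k_j,k_{j+1})}|C_{\mu^*}(k)-C_{\mu}(k)|,
\]
and summation over $j$ yields the stated identity.

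For the uniform bound $K/2^n$, I would note that on $(k_j,k_{j+1})$ the constant value $C'_{\mu^*}(t+)$ is the secant slope of $C_{\mu}$ over this interval, which by convexity lies between $C'_{\mu}(k_j+)$ and $C'_{\mu}(k_{j+1}+)$; the nondecreasing $C'_{\mu}(\cdot+)$ lies in the same range on $[k_j,k_{j+1})$. Hence $|C'_{\mu}(t+)-C'_{\mu^*}(t+)|\leq C'_{\mu}(k_{j+1}+)-C'_{\mu}(k_j+)$, and integrating and telescoping gives
\[
W(\mu,\mu^*)\;\leq\;\frac{K}{2^n}\bigl(C'_{\mu}(K+)-C'_{\mu}(0+)\bigr)\;\leq\;\frac{K}{2^n},
\]
using $C'_{\mu}(K+)=0$ (support in $[0,K]$) and $C'_{\mu}(0+)\geq -1$ from Lemma \ref{Lem Call Prop 1}.

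For \eqref{Thm Eq 3 Wasserstein Distance Estimates}, the classical remainder formula for linear interpolation of a $C^2$ function yields $C_{\mu}(k)-C_{\mu^*}(k)=\tfrac12 C''_{\mu}(\xi)(k-k_j)(k-k_{j+1})$ for some $\xi\in(k_j,k_{j+1})$, so $\sup_{k\in[k_j,k_{j+1})}|C_{\mu^*}(k)-C_{\mu}(k)|\leq T_{\mu}(K/2^n)^{2}/8$. Plugging this into the identity just established and summing over the $2^n$ subintervals produces a bound of the required order. The only genuinely delicate point is the identity $\int|g_j'|=2M_j$ for the concave function $g_j$, which must be carried out using one-sided derivatives because $C_{\mu}$ is only assumed convex; everything else is routine convexity and calculus.
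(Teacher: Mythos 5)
Your proposal is correct, and for the identity in \eqref{Thm Eq 1 Wasserstein Distance Estimates} and the bound $K/2^n$ it follows essentially the paper's route: Wasserstein distance as the $L^1$-distance of the distribution functions, Breeden--Litzenberger to pass to the right derivatives of $C_\mu$ and the piecewise constant slopes of $C_{\mu^*}$, a split of each subinterval at the point where the sign of $C'_{\mu}(t+)-m_j$ changes (your concave function $g_j=C_{\mu^*}-C_\mu$ packages this slightly more cleanly), and the telescoping bound $\sum_j\bigl(C'_\mu(k_{j+1}+)-C'_\mu(k_j+)\bigr)=C'_\mu(K+)-C'_\mu(0+)\le 1$. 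Where you genuinely deviate is the $C^2$ estimate \eqref{Thm Eq 3 Wasserstein Distance Estimates}: the paper keeps the endpoint-slope expression and bounds the first-order Taylor remainders $R_1C_\mu(k_{j+1},k_j)$ and $R_1C_\mu(k_j,k_{j+1})$, obtaining $T_\mu K^2 2^{-(2n+1)}$ per interval, whereas you feed the classical linear-interpolation error bound $\sup_{[k_j,k_{j+1}]}|C_{\mu^*}-C_\mu|\le \tfrac{T_\mu}{8}\bigl(\tfrac{K}{2^n}\bigr)^2$ directly into the exact identity just proved; summing gives $W(\mu,\mu^*)\le T_\mu K^2/2^{n+2}$, which is a factor $2$ sharper than the stated bound and in particular implies it. Both arguments are sound; yours is shorter because it reuses the equality from the first part and a textbook interpolation estimate, while the paper's is self-contained at the level of the intermediate inequality and does not need nonnegativity of $C_{\mu^*}-C_\mu$ at that stage. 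The only points to make explicit in a full write-up are the ones you flag: $C_{\mu^*}(k_j)=C_\mu(k_j)$ so that $g_j$ vanishes at the endpoints, and the use of one-sided derivatives together with (local Lipschitz) absolute continuity of the convex function $C_\mu$ to justify $\int_{k_j}^{k_{j+1}}|g_j'(t+)|\,dt=2\max g_j$.
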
    
  
A proof can be found in the appendix.  

\section{Bounds for Exotic Options}\label{sec:bounds}
As explained in the beginning we  consider a  risky asset at two time points $(S_{t_1},S_{t_2})=(X,Y).$ We are interested in finding upper bounds for prices of contingent claims of the form $c(X,Y),$ given a finite number of call prices on the same stock for both maturities $t_1$ and $t_2$.  A typical example for such a contingent claim would be an Asian option with payoff $c(X,Y)=(\frac12(X+Y)-K)^+.$
In what follows we denote by $(C_\mu,\mu)$ a consistent pair of price function and measure for $X$ and $(C_\nu,\nu)$  for $Y$. Since the general pricing theory implies that $(X,Y)$ is a martingale, i.e. $\Eop[Y|X]=X$ we must have by Lemma \ref{lem:cx1} that $\mu\le_c\nu$. We will restrict our consideration now to finite discrete measures $\mu$ and $\nu$ since every measure $\mu$ can be approximated arbitrary well by a sequence of finite measures. Moreover we assume that $S_1 \supset S_2$
i.e. if a call price is observable for strike $k_j^2$ for maturity $t_2$ then  a call price for the same strike is  also observable for maturity $t_1$.  We assume here that the observable call prices  are such that an arbitrage-free pricing model exists. Conditions for this are e.g.\ given in \cite{dh07}, Theorem 3.1. In case $S_1=S_2$ these boil down to $C^*_\mu$ and $C_\nu^*$ being candidate functions and $C_\mu^* \le C_\nu^*$ (see Corollary 4.1 in \cite{dh07}).

In what follows suppose that  $P^2$ is the space of all finite, discrete probability measures on $\R_+^2$. We will denote by $\Q\in P^2$ the joint distribution of $(X,Y)$. Then we define for two (discrete) probability measures $\mu \in P_1,\nu\in P_2$ with $\mu\le_c\nu$
\begin{eqnarray}
\mathcal{M}(\mu,\nu) \!\!\!\!&:=& \!\!\!\!\{\Q\!\in\! P^2\! : \Q(\cdot,\R_+\!)\!= \!\mu, \; \Q(\R_+,\cdot)\!= \!\nu, \int y \Q(x,dy) =x,   \Q\!\!-\!\!\text{a.s.} \},\\
\mathcal{M}(\mu,\cdot) \!\!\!\!&:=& \!\!\!\!\{\Q\in \mathcal{M}(\mu,\nu) :  \nu\in P_{2}, \mu\le_c\nu\}, \quad\mu\in P_1\\
\mathcal{M}(\cdot,\nu) \!\!\!\!&:=&\!\!\!\! \{\Q\in \mathcal{M}(\mu,\nu) : \mu \in P_{1}, \mu\le_c\nu\},\quad \nu\in P_2\\
\mathcal{M} \!\!\!\! &:=&\!\!\!\! \{\Q\in \mathcal{M}(\mu,\nu) : \mu \in P_{1}, \nu\in P_{2}, \mu\le_c\nu \}.
\end{eqnarray}
Note that by Lemma \ref{lem:cx1} the set $\mathcal{M}(\mu,\nu)$ is not empty if and only if $\mu\le_c\nu.$ 
The set $\mathcal{M}(\mu,\nu)$ is the set of all so-called martingale transports given the knowledge of the marginal distributions $\mu$ and $\nu$, i.e. the set of all potential pricing measures when we know the marginal distributions. However this is not the case in reality. Thus we consider the set $\mathcal{M}$ which consists of all potential pricing measures which are consistent with the finitely many observable call prices. The next result can easily be checked:

\begin{lemma}\label{lem:transcond}
Let $\Q\in P^2$ be a probability  measure and $\mu\le_c\nu$. Then $\Q\in \mathcal{M}(\mu,\nu)$  if and only if
\begin{itemize}
\item[(a)] $  \sum_{x} \Q(x,y)= \nu(y), \; \mbox{ for all } y\in supp(\nu),$
\item[(b)] $ \sum_{y} \Q(x,y) = \mu(x) \; \mbox{ for all } x\in supp(\mu),$
\item[(c)] $  \sum_{y} \Q(x,y)y = x\mu(x) \; \mbox{ for all } x\in supp(\mu),$
\end{itemize}
where $supp(\mu)$ and $supp(\nu)$ are the supports of measures $\mu$ and $\nu$ respectively.
\end{lemma}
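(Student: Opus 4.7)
The plan is to verify that the three conditions (a), (b), (c) are literal restatements of the defining conditions of $\mathcal{M}(\mu,\nu)$ once one works out what the integrals and marginal projections mean for a discrete probability measure on $\R_+^2$. Since $\Q\in P^2$ is a finite discrete measure, I may write $\Q(x,y):=\Q(\{(x,y)\})$ and restrict attention to points in $\Supp(\mu)\times\Supp(\nu)$: elsewhere all terms vanish and the identities are trivial.

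First I would dispatch (a) and (b) as the marginal conditions. By definition of the push-forward, $\Q(\R_+,\{y\})=\sum_{x}\Q(x,y)$, and this must equal $\nu(y)$ for every $y\in\Supp(\nu)$; if $y\notin\Supp(\nu)$, both sides are zero, so the restriction to the support is harmless. The same argument with the roles of $x,y$ swapped gives (b). Conversely, if (a) and (b) hold on the supports, one recovers $\Q(\cdot,\R_+)=\mu$ and $\Q(\R_+,\cdot)=\nu$ on the whole Borel $\sigma$-algebra, because $\mu$ is concentrated on $\Supp(\mu)$ and $\nu$ on $\Supp(\nu)$.

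Next, for (c), the content is the martingale condition $\int y\,\Q(x,dy)=x$ $\Q$-a.s., where $\Q(x,dy)$ denotes a regular conditional distribution of $Y$ given $X=x$. In the discrete case this conditional is simply $\Q(x,y)/\mu(x)$ for $x\in\Supp(\mu)$, so the martingale identity at such $x$ becomes $\sum_y \Q(x,y)\, y = x\,\mu(x)$, which is exactly (c). For $x\notin\Supp(\mu)$ the equation $\sum_y \Q(x,y)y=x\mu(x)$ holds trivially (both sides vanish because $\Q(x,y)=0$ by (b)), so the "$\Q$-a.s." qualifier is automatically handled. Conversely, if (b) and (c) hold, then dividing (c) by $\mu(x)>0$ on the support of $\mu$ recovers $\Eop[Y\mid X=x]=x$, which is the martingale property.

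The only even mildly delicate point is to keep track of the fact that the martingale condition in the definition of $\mathcal{M}(\mu,\nu)$ is only required $\Q$-a.s., whereas (c) is stated for every $x\in\Supp(\mu)$. The equivalence is immediate because in the discrete setting the $\Q$-null set where the conditional expectation can be modified coincides with $\{x:\mu(x)=0\}$, i.e. points outside $\Supp(\mu)$. Once this bookkeeping is in place, the lemma is a direct translation, and no further estimates or approximation arguments are needed.
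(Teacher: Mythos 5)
Your verification is correct and is exactly the routine translation the paper has in mind — it states the lemma with ``can easily be checked'' and gives no proof, so your argument (marginals by summation, martingale condition via the discrete conditional $\Q(x,y)/\mu(x)$, and the observation that the $\Q$-a.s.\ qualifier only exempts points outside $\Supp(\mu)$) supplies precisely the omitted bookkeeping. No gaps.
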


The first two equations guarantee the correct marginals, equation (c) expresses the martingale property. 

We are now interested in finding
\begin{equation}
\sup_{\Q\in \mathcal{M} } \Eop_\Q [c(X,Y)]
\end{equation}
an upper  price bound for the exotic option with payoff $c$, consistent with the given finite number of call price observations. Unlike in classical transportation problems we cannot separate the optimization problem into finding the best marginals and then the best copula, because in optimal martingale transport the optimal transportation plan depends crucially on the margins. However, under some assumptions we can  nevertheless solve the problem. 

\begin{theorem}\label{theo:upperb}
If $\mu\in P_{1}$ is an arbitrary, fixed (discrete) probability measure and $y\mapsto c(\cdot,y)$ is convex, then
\begin{equation}
\sup_{\Q\in \mathcal{M}(\mu,\cdot)  } \Eop_\Q [c(X,Y)] = \sup_{\Q\in \mathcal{M}(\mu,\nu^*)  } \Eop_\Q [c(X,Y)] 
\end{equation} 
\end{theorem}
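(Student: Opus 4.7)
The inequality $\sup_{\Q\in \mathcal{M}(\mu,\nu^*)}\Eop_\Q[c(X,Y)] \le \sup_{\Q\in \mathcal{M}(\mu,\cdot)}\Eop_\Q[c(X,Y)]$ is trivial, since $\nu^* \in P_2$ and $\mu \le_c \nu^*$ (this is the analogue of Lemma \ref{lem:mu*extremal} applied to $P_2$), so $\mathcal{M}(\mu,\nu^*)\subset\mathcal{M}(\mu,\cdot)$. The content of the theorem is the reverse inequality. My plan is to take an arbitrary $\Q \in \mathcal{M}(\mu,\nu)$ with $\nu \in P_2$ and $\mu \le_c \nu$, and produce a $\Q^* \in \mathcal{M}(\mu,\nu^*)$ with $\Eop_{\Q^*}[c(X,Y)] \ge \Eop_\Q[c(X,Y)]$.

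The construction uses a Markovian composition of two couplings. By the analogue of Lemma \ref{lem:mu*extremal} for $P_2$, any consistent $\nu\in P_2$ satisfies $\nu\le_c\nu^*$, and both have mean $1$, so Lemma \ref{lem:cx1} furnishes random variables $Y$ of law $\nu$ and $Z$ of law $\nu^*$ with $\Eop[Z\mid Y]=Y$. On a suitably enlarged probability space I place $X,Y,Z$ so that $(X,Y)$ has law $\Q$ and $(Y,Z)$ is this martingale coupling, chosen independently of $X$ given $Y$. In other words, I let $X\to Y\to Z$ be a Markov chain. Then $Z$ has marginal $\nu^*$, $X$ has marginal $\mu$, and
\[
\Eop[Z\mid X] = \Eop\bigl[\Eop[Z\mid X,Y]\mid X\bigr] = \Eop[\Eop[Z\mid Y]\mid X] = \Eop[Y\mid X]=X,
\]
using the Markov property in the second step and the martingale property of $\Q$ in the last. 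Hence the law $\Q^*$ of $(X,Z)$ lies in $\mathcal{M}(\mu,\nu^*)$.

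It remains to compare expected payoffs. Using the Markov property ($X$ and $Z$ conditionally independent given $Y$) and the convexity of $y\mapsto c(x,y)$ for each fixed $x$, I apply Jensen's inequality conditionally on $(X,Y)$:
\[
\Eop\bigl[c(X,Z)\bigm|X,Y\bigr] = \Eop\bigl[c(X,Z)\bigm|Y\bigr]\bigr|_{x=X} \ge c\bigl(X,\Eop[Z\mid Y]\bigr)= c(X,Y).
\]
Taking expectations gives $\Eop_{\Q^*}[c(X,Y)] \ge \Eop_\Q[c(X,Y)]$, which is the desired improvement. Since $\Q \in \mathcal{M}(\mu,\cdot)$ was arbitrary, the two suprema coincide.

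The only real subtlety is the construction of the joint law of $(X,Y,Z)$: one must paste together the coupling $(X,Y)\sim \Q$ and the martingale coupling of $\nu$ to $\nu^*$ along the common marginal $Y$, which is possible precisely because both couplings admit regular conditional distributions given $Y$ (all measures are finite and discrete, so this is automatic via the usual product-of-kernels argument). The convexity hypothesis is used in exactly one place, namely the conditional Jensen step, and this is where it becomes clear that convexity of $c$ only in the second argument suffices.
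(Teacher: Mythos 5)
Your proof is correct and follows essentially the same route as the paper: you glue a martingale coupling of $\nu$ to $\nu^*$ (guaranteed by $\nu\le_c\nu^*$ via Lemma \ref{lem:mu*extremal} and Lemma \ref{lem:cx1}) onto the given $\Q$ along the common marginal $\nu$, which is exactly the paper's kernel composition $\Q(x,z)=\sum_y \frac{\bar{\Q}(x,y)}{\nu(y)}\tilde{\Q}(y,z)$, and then verify marginals, the martingale property by the tower rule, and the payoff improvement by conditional Jensen using convexity in the second argument. Your probabilistic (Markov-chain) phrasing is just a restatement of the paper's explicit discrete sums, so there is no substantive difference.
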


\begin{proof}
Let $\nu\in P_{2} $ be an arbitrary, consistent, discrete probability measure with $\mu\le_c\nu$. By Lemma \ref{lem:mu*extremal} we know that $\nu\le_c\nu^*$. Choose $\bar{\Q}\in  \mathcal{M}(\mu,\nu)$ and $\tilde{\Q}\in \mathcal{M}(\nu,\nu^*)$ arbitrary.  Then we define a new probability measure $\Q\in P^2$ as follows:
$$ \Q(x,z) := \sum_{y} \frac{\bar{\Q}(x,y)}{\nu(y)} \tilde{\Q}(y,z), \quad x\in supp(\mu), z\in  supp(\nu^*).$$
We show that  $\Q\in \mathcal{M}(\mu,\nu^*)$. First we check that $\Q$ has the correct marginal distributions. For this we use that $\tilde{\Q}$ and $\bar{\Q}$ satisfy the condition of Lemma  \ref{lem:transcond}:
 \begin{eqnarray*}
 \sum_z \Q(x,z) &=& \sum_z \sum_y \frac{\bar{\Q}(x,y)}{\nu(y)} \tilde{\Q}(y,z) = \sum_y \frac{\bar{\Q}(x,y)}{\nu(y)} \sum_z \tilde{\Q}(y,z)=  \sum_y \bar{\Q}(x,y)=\mu(x),\\
 \sum_x \Q(x,z) &=& \sum_x \sum_y \frac{\bar{\Q}(x,y)}{\nu(y)} \tilde{\Q}(y,z) = \sum_y \tilde{\Q}(y,z) = \nu^*(z).
 \end{eqnarray*} 
Second the martingale property is also satisfied:
\begin{eqnarray*}
\sum_z \Q(x,z) z &=& \sum_z \sum_y \frac{\bar{\Q}(x,y)}{\nu(y)} \tilde{\Q}(y,z) z= 
\sum_y \frac{\bar{\Q}(x,y)}{\nu(y)} \sum_z \tilde{\Q}(y,z) z\\
&=& \sum_y \bar{\Q}(x,y) y= x\mu(x).
\end{eqnarray*}
Finally the expected value of the payoff $c(X,Y)$ under $\Q$ is larger than under $\bar{\Q}$, because
\begin{eqnarray*}
\sum_x\sum_z \Q(x,z) c(x,z) &=& \sum_x\sum_z \sum_y \frac{\bar{\Q}(x,y)}{\nu(y)} \tilde{\Q}(y,z) c(x,z)\\
&=& \sum_x \sum_y \bar{\Q}(x,y) \sum_z \frac{\tilde{\Q}(y,z)}{\nu(y)}c(x,z)\\
&\ge&  \sum_x \sum_y \bar{\Q}(x,y) c\Big(x,  \sum_z \frac{\tilde{\Q}(y,z)}{\nu(y)}z\Big) \\
&=&  \sum_x \sum_y \bar{\Q}(x,y) c(x,y).
\end{eqnarray*}
Thus $\Eop_\Q [c(X,Y)]\ge \Eop_{\bar{\Q}}[c(X,Y)]$ for all $\nu\in P_2 $  with $\mu\le_c\nu$.
This proves the statement.
\end{proof}

Theorem \ref{theo:upperb} implies that if the payoff function $c$ is convex in the second component, then  we can fix $\nu^*$ as worst margin for the second component of the pricing measure, irrespective of the first component.

\begin{corollary} 
If  $y\mapsto c(\cdot,y)$ is convex, then
\begin{equation}
\sup_{\Q\in \mathcal{M}  } \Eop_\Q [c(X,Y)] = \sup_{\Q\in \mathcal{M}(\cdot,\nu^*)  } \Eop_\Q [c(X,Y)] 
\end{equation} 

\end{corollary}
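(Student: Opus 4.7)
The plan is to deduce the corollary as essentially a bookkeeping consequence of Theorem \ref{theo:upperb}. By the very definitions of the sets involved one has the decompositions
$$\mathcal{M} = \bigcup_{\mu \in P_1} \mathcal{M}(\mu, \cdot) \quad \text{and} \quad \mathcal{M}(\cdot, \nu^*) = \bigcup_{\mu \in P_1} \mathcal{M}(\mu, \nu^*),$$
so the left-hand supremum rewrites as the iterated supremum
$$\sup_{\Q \in \mathcal{M}} \Eop_\Q[c(X,Y)] = \sup_{\mu \in P_1}\, \sup_{\Q \in \mathcal{M}(\mu,\cdot)} \Eop_\Q[c(X,Y)].$$
I would then apply Theorem \ref{theo:upperb} to the inner supremum for each fixed $\mu \in P_1$, replacing $\mathcal{M}(\mu,\cdot)$ by $\mathcal{M}(\mu,\nu^*)$, and reassemble the outer supremum using the second decomposition above to obtain $\sup_{\Q \in \mathcal{M}(\cdot,\nu^*)} \Eop_\Q[c(X,Y)]$.

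The only point that requires a moment's check is that Theorem \ref{theo:upperb} is legitimately applicable, i.e.\ that $\mathcal{M}(\mu,\nu^*)$ is non-empty whenever $\mathcal{M}(\mu,\cdot)$ is (those $\mu \in P_1$ for which $\mathcal{M}(\mu,\cdot) = \emptyset$ contribute nothing to either side and can be discarded). For such a $\mu$ there exists some $\nu \in P_2$ with $\mu \le_c \nu$; Lemma \ref{lem:mu*extremal} applied to $P_2$ and $\nu^*$ gives $\nu \le_c \nu^*$, and transitivity of the convex order yields $\mu \le_c \nu^*$. Lemma \ref{lem:cx1} then guarantees $\mathcal{M}(\mu,\nu^*) \neq \emptyset$, so the theorem applies and the argument closes. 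There is no real obstacle here; the substantive work has already been done in Theorem \ref{theo:upperb}, and the corollary is a clean repackaging.
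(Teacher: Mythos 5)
Your argument is correct and is essentially the paper's intended one: the corollary is stated there without proof as an immediate consequence of Theorem \ref{theo:upperb}, obtained exactly by taking the supremum over $\mu\in P_1$ as you do. Your added check that $\mathcal{M}(\mu,\nu^*)\neq\emptyset$ via $\mu\le_c\nu\le_c\nu^*$ and Lemma \ref{lem:cx1} matches the reasoning already used inside the proof of that theorem, so nothing is missing.
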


Discussing the first component is more involved. We  need the following definition. 

\begin{definition}
Let $\Q$ be a measure on $\R_+^2$. 
\begin{itemize}
\item[a)] $\Q$ is called {\em direct transport} if  $\Q=\omega\delta_{(x,x)}$ for some $\omega\in(0,1)$ and $x\in\R$.
\item[b)] $\Q$ is called  {\em two-way transport} if $\Q= \vartheta_1 \delta_{(x,y_1)}+\vartheta_2\delta_{(x,y_2)}$ for some $\vartheta_1,\vartheta_2\in(0,1)$ and $x,y_1,y_2\in\R$ with $ y_2<x<y_1$ and with the property
$$(\vartheta_1+\vartheta_2) x = \vartheta_1 y_1+\vartheta_2 y_2.$$
\end{itemize}
\end{definition}

Note that if $\Q\in P^2$ in part b), then $\Q\in \mathcal{M} (\delta_x,\vartheta_1 \delta_{y_1}+\vartheta_2\delta_{y_2})$. The next lemma shows that every $\Q\in\mathcal{M}(\mu,\nu)$ can be decomposed into a finite number of direct and two-way transports.

\begin{lemma}\label{lem:twoway}
Suppose $\mu,\nu$ are two discrete probability measures. Then  $\Q\in\mathcal{M}(\mu,\nu)$ if and only if $\Q$ has margins $\mu$ and $\nu$ and can be written as a finite sum of direct and two-way transports.
\end{lemma}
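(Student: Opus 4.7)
The (if) direction is straightforward. A direct transport $\omega\delta_{(x,x)}$ trivially satisfies the martingale condition (c) of Lemma \ref{lem:transcond}, and a two-way transport $\vartheta_1\delta_{(x,y_1)}+\vartheta_2\delta_{(x,y_2)}$ satisfies it by the very identity $(\vartheta_1+\vartheta_2)x=\vartheta_1 y_1+\vartheta_2 y_2$ included in the definition. Summing such measures preserves condition (c) because both sides are additive in $\Q$, so any finite sum with the prescribed margins $\mu$ and $\nu$ lies in $\mathcal{M}(\mu,\nu)$.

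For the (only if) direction I would give a constructive decomposition. Fix $\Q\in\mathcal{M}(\mu,\nu)$; by assumption $\mu$ and $\nu$ are finite discrete, so $\Q$ has finite support. If every atom of $\Q$ lies on the diagonal, $\Q$ is already a sum of direct transports and we are done. Otherwise pick an atom $(x,y_1)$ with $y_1\neq x$, say $y_1>x$. Condition (c) applied at the point $x$ reads
\[
\sum_{y}\Q(x,y)(y-x)=0,
\]
so there must exist another atom $(x,y_2)$ with $y_2<x$. Set
\[
\vartheta_1:=\min\!\Big(\Q(x,y_1),\,\Q(x,y_2)\tfrac{x-y_2}{y_1-x}\Big),\qquad \vartheta_2:=\vartheta_1\tfrac{y_1-x}{x-y_2},
\]
so that $\vartheta_1(y_1-x)=\vartheta_2(x-y_2)$, both $\vartheta_i$ are positive, $\vartheta_i\le \Q(x,y_i)$, and at least one of these two inequalities is an equality. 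Then $\Q_1:=\vartheta_1\delta_{(x,y_1)}+\vartheta_2\delta_{(x,y_2)}$ is a genuine two-way transport, and $\Q':=\Q-\Q_1$ is a non-negative measure with strictly fewer atoms than $\Q$.

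A short check shows that $\Q'$ still satisfies the martingale identity $\sum_y\Q'(x',y)\,y=x'\Q'(x',\mathbb{R}_+)$ at every $x'$ in its support: for $x'\neq x$ nothing has changed, while at $x'=x$ we subtracted mass $\vartheta_1+\vartheta_2$ together with first moment $\vartheta_1y_1+\vartheta_2y_2=(\vartheta_1+\vartheta_2)x$, preserving the conditional mean $x$. Iterating the procedure on $\Q'$, each step strictly decreases the (finite) number of atoms, so after finitely many iterations we obtain a remainder whose every atom is diagonal, i.e.\ a sum of direct transports. Adding back all the extracted two-way transports yields the desired finite decomposition of $\Q$.

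The main obstacle is not finding the right atoms $y_1,y_2$ (the martingale condition at $x$ immediately supplies them), but making sure the algorithm terminates; this is secured by the discreteness of the margins, which bounds the number of atoms of $\Q$ and forces the choice of $\vartheta_1,\vartheta_2$ above to eliminate at least one atom per step.
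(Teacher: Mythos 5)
Your proof is correct. The ``if'' direction is the same as the paper's (additivity of the martingale identity over the summands). For the ``only if'' direction you take a genuinely more self-contained route: the paper fixes an atom $x$ of $\mu$, observes that the conditional measure $\tilde{\Q}=\sum_y \Q(x,y)\delta_{(x,y)}$ is the unique martingale transport between its own margins (the first being a point mass at $x$), and then delegates the actual decomposition into direct and two-way transports to the primal algorithm of the companion paper \cite{bschm18} (or to the remark about building two-way transports from the extreme points of $\tilde\nu$). You instead run an explicit greedy peeling argument: the fiberwise martingale balance $\sum_y \Q(x,y)(y-x)=0$ guarantees that any off-diagonal atom $(x,y_1)$, $y_1>x$, is matched by some $(x,y_2)$ with $y_2<x$; your choice $\vartheta_1=\min\bigl(\Q(x,y_1),\,\Q(x,y_2)\tfrac{x-y_2}{y_1-x}\bigr)$, $\vartheta_2=\vartheta_1\tfrac{y_1-x}{x-y_2}$ extracts a two-way transport, preserves the barycenter identity of the remainder, keeps it non-negative, and saturates (hence removes) at least one atom, so finiteness of the support forces termination with a purely diagonal remainder. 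Since a two-way transport lives in a single fiber $\{x\}\times\R_+$, your algorithm is in substance the fiberwise construction the paper has in mind, but it buys a fully elementary, citation-free proof with an explicit termination argument, at the cost of a few lines of bookkeeping that the paper avoids by invoking the external algorithm.
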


\begin{proof}
Suppose $\Q\in\mathcal{M}(\mu,\nu)$. Consider the measure $\tilde{Q}$ which we obtain when we fix $x\in supp(\mu)$: $\tilde{Q} = \sum_{y\in supp(\nu)} \delta_{(x,y)} \Q(x,y)$. The margins $\tilde{\mu}, \tilde{\nu}$ of $\tilde{Q}$ are by Lemma \ref{lem:transcond} in convex order and $\tilde{Q}$ is the unique element in $\mathcal{M}(\tilde{\mu},\tilde{\nu})$. It can e.g.\ be constructed using the primal algorithm  in Sec. 5 of \cite{bschm18}. From this algorithm we see that $\tilde{\Q}$ can be decomposed into a finite number of direct and two-way transports. But one could also start constructing the two-way transports from the extreme points in $\tilde{\nu}$.

 This can be done with all $x\in supp(\mu)$ and the decomposition of $\Q$ follows. 

Contrary suppose $\Q = \sum_{j=1}^n \Q_j$ where $\Q_j$ are direct or two-way transports and $\Q$ has the right margins. It is left to show that $\Q\in \mathcal{M}(\mu,\nu)$, i.e. we have to show the martingale property. But this is true if it separately holds for all $\Q_j$ which is the case by definition.
\end{proof}

The next lemma will be crucial for our main result: 

\begin{lemma}\label{lem:mutilde}
Suppose $\mu\in P_1$ and $\mu$ has a positive mass $\kappa>0$ on a point $\tilde{x} \notin S_1.$ Denote by $x_2 := \max \{x \in S_1\cup supp(\mu): x < \tilde{x}\}$ and by $x_1 := \min \{x \in S_1\cup supp(\mu): x > \tilde{x}\}.$ Then
$$\tilde{\mu}:= \mu - \kappa\delta_{\tilde{x}}+\kappa\frac{x_1-\tilde{x}}{x_1-x_2}\delta_{x_2}+\kappa\frac{\tilde{x}-x_2}{x_1-x_2}\delta_{x_1}$$
satisfies $\tilde{\mu}\in P_1$,  $\mu\le_c \tilde{\mu}$ and $\tilde{\mu}$ can be obtained from $\mu$ by a two-way transport.
\end{lemma}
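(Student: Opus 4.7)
The plan is to verify in order that $\tilde{\mu}$ has total mass one and mean one, that it reproduces all observed call prices so lies in $P_1$, that $\mu\le_c\tilde{\mu}$, and finally to describe the two-way transport explicitly. The single algebraic fact underlying every step is that, by the choice of the weights, $\tilde{x}$ is the convex combination
\[
\tilde{x} \;=\; \tfrac{x_1-\tilde{x}}{x_1-x_2}\, x_2 \;+\; \tfrac{\tilde{x}-x_2}{x_1-x_2}\, x_1,
\]
so that replacing the atom $\kappa\delta_{\tilde{x}}$ by the two indicated atoms preserves both total mass and first moment. Testing against $f(x)=1$ and $f(x)=x$ thus immediately yields $\tilde\mu(\R_+)=1$ and $\int x\,\tilde\mu(dx)=1$. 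For the convex order, I would test against an arbitrary convex $f$: by Jensen applied at the three points $x_2,\tilde x,x_1$ one has $f(\tilde{x})\le \tfrac{x_1-\tilde{x}}{x_1-x_2}f(x_2)+\tfrac{\tilde{x}-x_2}{x_1-x_2}f(x_1)$, and multiplying by $\kappa$ and adding $\int f\,d\mu$ on both sides gives $\int f\,d\mu\le\int f\,d\tilde\mu$.

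The main obstacle is showing $\tilde\mu\in P_1$, i.e.\ that $\int (x-k_j^1)^+\tilde\mu(dx)=c_j^1$ at every observed strike. The decisive observation is that the maximality of $x_2$ and minimality of $x_1$ over $S_1\cup\mathrm{supp}(\mu)$ imply that the open interval $(x_2,x_1)$ contains no element of $S_1$. Consequently for every strike $k_j^1$ one has either $k_j^1\le x_2$, in which case $x\mapsto(x-k_j^1)^+$ is affine on $\{x_2,\tilde{x},x_1\}$ and the integral change vanishes by the mass- and mean-preserving properties already obtained, or $k_j^1\ge x_1>\tilde{x}>x_2$, in which case $(x-k_j^1)^+$ vanishes at all three points. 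Either way $\int(x-k_j^1)^+\tilde\mu(dx)=\int(x-k_j^1)^+\mu(dx)=c_j^1$.

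Finally I would exhibit the transport
\[
\Q := \sum_{z\in\mathrm{supp}(\mu)\setminus\{\tilde{x}\}} \mu(z)\,\delta_{(z,z)} \;+\; \kappa\tfrac{x_1-\tilde{x}}{x_1-x_2}\,\delta_{(\tilde{x},x_2)} \;+\; \kappa\tfrac{\tilde{x}-x_2}{x_1-x_2}\,\delta_{(\tilde{x},x_1)}.
\]
A direct check against Lemma \ref{lem:transcond} shows $\Q\in\mathcal{M}(\mu,\tilde\mu)$: the first marginal is $\mu$, the second is $\tilde\mu$, and the martingale condition at $\tilde{x}$ is again the convex-combination identity. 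By construction, the off-diagonal part of $\Q$ is exactly a single two-way transport at $\tilde{x}$ with $y_2=x_2<\tilde{x}<x_1=y_1$, while the remainder consists of direct transports on $\mathrm{supp}(\mu)\setminus\{\tilde{x}\}$, which establishes the last assertion.
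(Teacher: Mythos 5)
Your proof is correct, but it takes a genuinely different route from the paper's. The paper argues at the level of the call-price functions: it writes down the linear pieces $f_1,f_2$ of $C_\mu$ adjacent to $\tilde{x}$, computes the slope of the chord $f$ on $[x_2,x_1]$, identifies the new masses at $x_2,x_1$ as slope differences, notes that $C_{\tilde\mu}$ coincides with $C_\mu$ outside $[x_2,x_1]$ (hence $\tilde\mu\in P_1$), and deduces $\mu\le_c\tilde\mu$ from $C_\mu\le C_{\tilde\mu}$ via Lemma \ref{lem:cx2}. You instead work directly with the measures: the convex-combination identity at $\tilde{x}$ gives preservation of mass and mean, testing against convex functions gives $\mu\le_c\tilde\mu$ straight from Definition \ref{Def Convex Order}, and membership in $P_1$ follows from the observation --- which you make explicit, whereas it stays implicit in the paper's remark that $C_{\tilde\mu}=C_\mu$ off $[x_2,x_1]$ --- that the maximality of $x_2$ and minimality of $x_1$ force $(x_2,x_1)\cap S_1=\emptyset$, so each observed strike either makes $(x-k)^+$ affine on $\{x_2,\tilde{x},x_1\}$ or annihilates it there. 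Your explicit coupling $\Q$, checked against Lemma \ref{lem:transcond}, also substantiates the final two-way-transport claim somewhat more completely than the paper's closing one-line remark. Both arguments rest on the same geometric fact (the chord construction over $[x_2,x_1]$); yours is the more elementary and self-contained, avoiding Lemma \ref{lem:cx2} and the slope bookkeeping, while the paper's version has the advantage of exhibiting $C_{\tilde\mu}$ explicitly, matching the graphical presentation in Figure \ref{fig:twowayle} and the price-function viewpoint used elsewhere in the paper.
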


\begin{proof}
The situation is best illustrated with Figure \ref{fig:twowayle} which shows a part of the functions $C_\mu$ and $C_{\tilde{\mu}}$.
   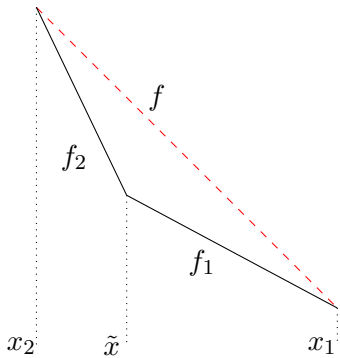
\begin{figure}[!htbp]\begin{center}
       \begin{tikzpicture}[scale=1]

               \draw[dashed,red](0,4)--(4,0);
            \draw(0,4)--(1.2,1.5);
             \draw(1.2,1.5)--(4,0);
       
         \draw[dotted](0,4)--(0,-0.5);
        \node  at (-0.2,-0.5) {$x_2$};
         \draw[dotted](1.2,1.5)--(1.2,-0.5);
        \node  at (1.0,-0.5) {$\tilde{x}$};
           \draw[dotted](4,0)--(4,-0.5);
        \node  at (3.8,-0.5) {${x_1}$};
        
           \node  at (0.5,2) {$f_2$};
          \node at (2.2,0.6) {$f_1$};
           \node at (1.6,2.8) {$f$};                  

        \end{tikzpicture}
        \caption{$C_\mu$ (solid line) and $C_{\tilde{\mu}}$ (dashed line) on the interval $[x_2,x_1]$  in the situation of Lemma \ref{lem:mutilde}.}
          \label{fig:twowayle}
  \end{center}
 
    \end{figure}
    
    Recall that $C_\mu(k) := \sum_{x\in supp(\mu)} (x-k)^+ \mu(x)$, i.e.\ the function is piecewise linear and has kinks at points which have positive mass.
The (black) solid line belongs to $C_\mu$. We will show that the dashed (red) line belongs to $C_{\tilde{\mu}}$. In order to do so let us write down the three linear functions $f_1,f_2,f$ in Figure \ref{fig:twowayle}.

For $x\in [x_2,\tilde{x}]$ we have $f_2(x) = sx+c_2$ where $s<0$ is the slope and $c_2\in\R$. 

For $x\in [\tilde{x},x_1]$ we have $f_1(x) = (s+\kappa)x+c_1$. Both functions coincide at $\tilde{x}$, thus $c_2 = \kappa\tilde{x}+c_1.$ The functional form of the dashed (red) line is
$$ f(x) = \frac{f_1(x_1)-f_2(x_2)}{x_1-x_2}x+\tilde{c}.$$
Since $f_2(x_2)= sx_2+\kappa \tilde{x}+c_1$ and $f_1(x_1)=(s+\kappa)x_1+c_1$ we obtain
$$  \frac{f_1(x_1)-f_2(x_2)}{x_1-x_2} = \frac{1}{x_1-x_2}\big( s(x_1-x_2)+\kappa x_1-\kappa \tilde{x}\big)= s+ \kappa \frac{x_1-\tilde{x}}{x_1-x_2}.$$
Thus we get $$f(x)=\Big( s+ \kappa \frac{x_1-\tilde{x}}{x_1-x_2}\Big) x + \tilde{c}.$$
Since $f(x_2)=f_2(x_2)$ and $f(x_1)=f_1(x_1)$ we obtain $\tilde{c}= c_1+  \kappa x_1\frac{\tilde{x}-{x_2}}{x_1-x_2}$. The additional mass $\tilde{\mu}$ places on $x_2$ is the difference of the slopes of $f$ and $f_2$ at $x_2+$ which is $$\theta:= \kappa\frac{x_1-\tilde{x}}{x_1-x_2}\in (0,\kappa)$$
and the additional mass $\tilde{\mu}$ places on $x_1$ is the difference of the slope of $f$ and $f_1$ at $x_1-$ which is given by
$$\kappa-\theta= \kappa\frac{\tilde{x}-x_2}{x_1-x_2}.$$
$\tilde{\mu}$ has no mass at $\tilde{x}$ any more since there is no kink here. Hence $f$ really corresponds to $\tilde{\mu}$ as defined in the statement. That $\tilde{\mu} \ge_c \mu$ is clear from the construction and Lemma \ref{lem:cx2}.  Moreover, $C_{\tilde{\mu}}$ is still the same as $C_\mu$ on $(-\infty,x_2]$ and on $[x_1,\infty)$ thus it satisfies $\tilde{\mu}\in P_1$. Finally note that $\delta_{(\tilde{x},x_1)} \kappa\frac{\tilde{x}-x_2}{x_1-x_2}+\delta_{(\tilde{x},x_2)} \kappa\frac{x_1-\tilde{x}}{x_1-x_2}$ is a two-way transport.
\end{proof}



In order to deal with the second worst case margin for price bounds we need a special property of the payoff $c$.

\begin{definition}
A function $c:\R^2\to\R$ is called {\em directionally convex} if $c$ is convex in both components and $c$ is {\em supermodular}, i.e. for all $x,y\in\R^2$
$$ f(x)+f(y)\le f(x\wedge y)+f(x\vee y),$$
where $x\wedge y=(\min\{x_1,y_1\},\min\{x_2,y_2\})$ and $x\vee y=(\max\{x_1,y_1\},\max\{x_2,y_2\})$.
\end{definition}

\begin{remark}
For equivalent definitions of the directionally convex property, see \cite{ms} Theorem 3.12.2. In particular if $f\in C^2$, then $f$ is directionally convex if and only if
$$ \frac{\partial^2}{\partial x_i\partial x_j} f(x)\ge 0, \quad \mbox{ for all } x\in\R^2, i,j=1,2.$$

\end{remark}

Next we can show

\begin{theorem}\label{th:main}
If  $ c$ is directionally convex, then
\begin{equation}\label{eq:prob}
\sup_{\Q\in \mathcal{M}  } \Eop_\Q [c(X,Y)] = \sup_{\Q\in \mathcal{M}(\mu^*,\nu^*)  } \Eop_\Q [c(X,Y)] 
\end{equation} 
\end{theorem}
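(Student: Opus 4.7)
My plan is to first apply the Corollary following Theorem \ref{theo:upperb} to reduce to showing that for any $\bar{\Q}\in\mathcal{M}(\mu,\nu^*)$ with $\mu\in P_1$ and $\mu\neq\mu^*$ one can construct $\tilde{\Q}\in\mathcal{M}(\tilde{\mu},\nu^*)$, with $\tilde{\mu}$ as in Lemma \ref{lem:mutilde}, satisfying $\Eop_{\tilde{\Q}}[c(X,Y)]\geq\Eop_{\bar{\Q}}[c(X,Y)]$. The sandwich trick of Theorem \ref{theo:upperb} cannot be mirrored on the first marginal because the martingale relation is one-directional in time, so the modification must be performed directly on $\bar{\Q}$. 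Iterating this basic step will then bring the first marginal to $\mu^*$ in finitely many stages; the reverse inequality $\mathcal{M}(\mu^*,\nu^*)\subset\mathcal{M}$ is immediate.

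To build $\tilde{\Q}$: by Lemma \ref{lem:mu*} (plus the fact that a piecewise linear function is determined by its values at its kinks) $\mu^*$ is the unique element of $P_1$ supported on $S_1$, so $\mu\neq\mu^*$ yields some $\tilde{x}\in\mathrm{supp}(\mu)\setminus S_1$ with mass $\kappa>0$; write $x_2,x_1$ for its neighbours in $S_1\cup\mathrm{supp}(\mu)$ and $p=(x_1-\tilde{x})/(x_1-x_2)$. The conditional law $\rho:=\bar{\Q}(\tilde{x},\cdot)/\kappa$ is supported in $\mathrm{supp}(\nu^*)\subset S_1$, has mean $\tilde{x}$, and—because the minimality of $x_2,x_1$ forces $(x_2,x_1)\cap S_1=\emptyset$—splits as $\rho=\rho_L+\rho_H$ with $\rho_L$ supported in $(-\infty,x_2]$ and $\rho_H$ in $[x_1,\infty)$; let $a_L,a_H$ denote their masses and $m_L\leq x_2<x_1\leq m_H$ their conditional means. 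Define $\tilde{\Q}$ by redirecting a fraction $f_L$ of $\kappa\rho_L$ and $f_H$ of $\kappa\rho_H$ from the row $\tilde{x}$ into the row $x_2$, and the complementary fractions into the row $x_1$. Imposing that the new first marginal equal $\tilde{\mu}$ and that the martingale identity still hold at $x_2$ (and hence automatically at $x_1$) gives a $2\times 2$ linear system with unique solution
\[
f_L=\frac{p(m_H-x_2)}{a_L(m_H-m_L)},\qquad f_H=\frac{p(x_2-m_L)}{a_H(m_H-m_L)},
\]
and a short computation using $a_L m_L+a_H m_H=\tilde{x}$ verifies $0\leq f_H<p<f_L\leq 1$, so that $\tilde{\Q}\in\mathcal{M}(\tilde{\mu},\nu^*)$.

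The main obstacle is showing $\Delta:=\Eop_{\tilde{\Q}}[c]-\Eop_{\bar{\Q}}[c]\geq 0$. Expanding the change of expectation on the three affected rows and rearranging algebraically (using $f_L c(x_2,y)+(1-f_L)c(x_1,y)=p c(x_2,y)+(1-p)c(x_1,y)+(f_L-p)\psi(y)$ with $\psi(y):=c(x_2,y)-c(x_1,y)$, and the explicit form of $f_L-p,f_H-p$) gives
\[
\Delta=\kappa\int\bigl[p\,c(x_2,y)+(1-p)\,c(x_1,y)-c(\tilde{x},y)\bigr]\,d\rho(y)+\frac{\kappa\,p(\tilde{x}-x_2)}{m_H-m_L}\bigl(\bar{\psi}_L-\bar{\psi}_H\bigr),
\]
where $\bar{\psi}_L,\bar{\psi}_H$ are the averages of $\psi$ under the normalised $\rho_L,\rho_H$. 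The integrand of the first term is nonnegative by convexity of $c(\cdot,y)$ at $\tilde{x}=p x_2+(1-p)x_1$. The second term is nonnegative because supermodularity of $c$ makes $\psi$ non-increasing in $y$, and with $\rho_L$ supported in $(-\infty,x_2]$ and $\rho_H$ in $[x_1,\infty)$ this yields $\bar{\psi}_L\geq\psi(x_2)\geq\psi(x_1)\geq\bar{\psi}_H$. This is precisely the step where the supermodularity component of directional convexity (and not merely componentwise convexity, which would suffice for Theorem \ref{theo:upperb}) is essential.

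Finally, if one always picks $\tilde{x}$ to be the leftmost element of $\mathrm{supp}(\mu)\setminus S_1$ then $x_2\in S_1$ automatically, and the new atoms produced at $x_2$ and $x_1$ already belong to $S_1\cup(\mathrm{supp}(\mu)\setminus\{\tilde{x}\})$, so $|\mathrm{supp}(\mu)\setminus S_1|$ strictly decreases by one at each iteration. Finiteness of $\mathrm{supp}(\mu)$ forces termination at $\mu=\mu^*$, and chaining the $\Delta\geq 0$ inequalities along the iteration yields the required $\Q\in\mathcal{M}(\mu^*,\nu^*)$ with $\Eop_\Q[c]\geq\Eop_{\bar{\Q}}[c]$.
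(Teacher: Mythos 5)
Your argument is correct, and it follows the paper's overall strategy (first freeze the second margin at $\nu^*$ via Theorem \ref{theo:upperb} and its corollary, then push mass of the first margin off non-strike atoms $\tilde{x}$ onto the neighbouring points $x_2,x_1$ of $S_1\cup\Supp(\mu)$ as in Lemma \ref{lem:mutilde}, with convexity in $x$ plus supermodularity producing the increase), but the execution of the key step is genuinely different. The paper first decomposes $\Q$ into direct and two-way transports (Lemma \ref{lem:twoway}), pairs one two-way transport $\tilde{x}\to\{y_1,y_2\}$ of $\Q$ with the two-way transport $\tilde{x}\to\{x_1,x_2\}$ of Lemma \ref{lem:mutilde}, moves only the minimum of the two masses, and verifies a four-point inequality whose final reduction is exactly the supermodularity inequality $c(x_1,y_1)-c(x_1,y_2)-c(x_2,y_1)+c(x_2,y_2)\ge 0$. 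You instead treat the whole conditional law $\rho=\bar{\Q}(\tilde{x},\cdot)/\kappa$ at once: the observation that $\Supp(\nu^*)\subseteq S_2\subseteq S_1$ misses the gap $(x_2,x_1)$ lets you split $\rho=\rho_L+\rho_H$, and solving the $2\times 2$ system for $f_L,f_H$ (marginal mass $p$ at $x_2$ plus the barycenter condition there, the one at $x_1$ being automatic) gives the modified coupling in one stroke. Your inequality $\bar\psi_L\ge\psi(x_2)\ge\psi(x_1)\ge\bar\psi_H$ is the averaged form of the paper's four-point inequality, using that supermodularity makes $y\mapsto c(x_2,y)-c(x_1,y)$ non-increasing; both proofs use the ordering $\Supp(\nu^*)\cap(x_2,x_1)=\emptyset$ in the same essential way (the paper's $y_2\le x_2<x_1\le y_1$). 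What your route buys is that Lemma \ref{lem:twoway} is not needed at all, the per-atom bookkeeping with minimal masses disappears, and your termination argument (always pick the leftmost element of $\Supp(\mu)\setminus S_1$, so no new non-strike atoms are created and their number drops by one per step, and the unique element of $P_1$ supported on $S_1$ is $\mu^*$) is more explicit than the paper's "repeat until all mass sits on strikes". What the paper's route buys is a completely elementary finite computation on four points, whereas you must check positivity of $a_L,a_H$ and the bounds $0\le f_H\le f_L\le 1$ (which you do, correctly, via $a_L(m_H-m_L)=m_H-\tilde{x}$ and $a_H(m_H-m_L)=\tilde{x}-m_L$). In short: a valid proof, same two analytic ingredients, but a cleaner one-shot row modification in place of the paper's two-way-transport pairing.
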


\begin{proof}
We know already that we can fix the upper margin to be $\nu^*$.  Now let $\mu\in P_1$ be arbitrary and $\mu^*$ the convex upper bound construction from Lemma \ref{lem:mu*extremal}.   Let  $\Q\in \mathcal{M}(\mu,\nu^*)$ be an arbitrary martingale transport. We will show that as long as $\mu$ possess mass on atoms $\tilde{x}$ which are no strike prices (like in Lemma \ref{lem:mutilde}),  the expectation $\Eop_\Q [c(X,Y)]$ can be increased. 

Suppose $\mu$ has a positive mass on a point $\tilde{x}\notin S_1$. By Lemma \ref{lem:mutilde} there is a two-way transport from $\tilde{x}$ which shifts mass to neighbouring points $x_1$ and $x_2$ which are either strike prices or other atoms of $\mu$. By Lemma \ref{lem:twoway} there is also a two-way transport from $\tilde{x}$ to $y_1, y_2\in supp(\nu^*)$ in $\Q$. 
In both transports a certain mass from $\tilde{x}$ is transported and we consider the smaller of the two masses. I.e. in the other transport we consider the respective part of the mass. Suppose this mass is $\kappa>0$. More precisely we consider now the following two transports. The first between $\mu$ and $\nu^*$: $ \alpha\delta_{(\tilde{x},y_1)}+(\kappa-\alpha)\delta_{(\tilde{x},y_2)}$ with $0\le \alpha\le \kappa\le 1$ and the second transport between $\mu$ and $\tilde{\mu}$ (see Figure \ref{fig:zweiTP}):  $\theta\delta_{(\tilde{x},x_1)}+(\kappa-\theta)\delta_{(\tilde{x},x_2)}$ with $0\le \theta\le \kappa\le 1.$ Note that by definition and our assumption on the strike prices  we have that $y_2\le x_2 <x_1\le y_1$. 

   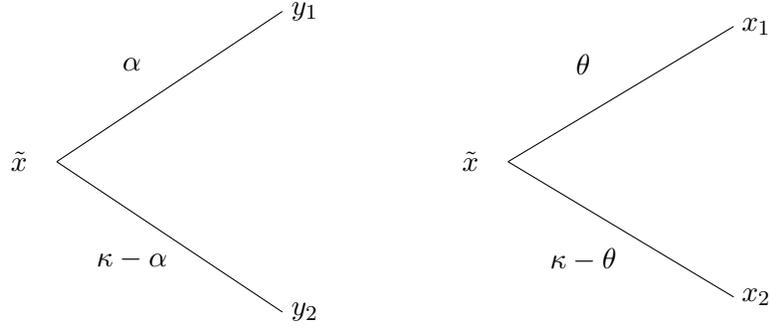
\begin{figure}[!htbp]\begin{center}

       \begin{tikzpicture}[scale=1]

               \draw(0,0)--(3,2);
            \draw(0,0)--(3,-2);
       
         \node  at (-0.5,0) {$\tilde{x}$};
          \node at (3.3,2) {$y_1$};
           \node at (3.3,-2) {$y_2$};                  
          \node  at (1,1.3) {$\alpha$};
            \node at (1,-1.3) {$\kappa-\alpha$};
        
            \draw(6,0)--(9,1.8);
            \draw(6,0)--(9,-1.8);
       
         \node  at (5.5,0) {$\tilde{x}$};
          \node at (9.3,1.8) {$x_1$};
           \node at (9.3,-1.8) {$x_2$};                  
          \node  at (7,1.3) {$\theta$};
            \node at (7,-1.3) {$\kappa-\theta$};

        \end{tikzpicture}
        \caption{Left: Two-way transport from $\mu$ to $\nu^*$. Right: Two-way transport from $\mu$ to $\tilde{\mu}$}
        \label{fig:zweiTP}
  \end{center}
    \end{figure}
 
In particular by the definition of a two-way transport it holds that
\begin{itemize}
\item[(i)] $\alpha y_1+(\kappa-\alpha)y_2= \kappa \tilde{x}$.
\item[(ii)] $\theta x_1+(\kappa-\theta) x_2= \kappa \tilde{x}$.
\end{itemize}

Now we will replace the transport $\Q$ from $\mu$ to $\nu^*$ by one which yields a higher expectation and where the first margin is  $\tilde{\mu}$. We consider the following transport (see Figure \ref{fig:construction}):
$$ \Q^* := \Q - \alpha \delta_{(\tilde{x},y_1)}-(\kappa-\alpha)\delta_{(\tilde{x},y_2)}+\vartheta_1 \delta_{(x_1,y_1)}+(\theta-\vartheta_1) \delta_{(x_1,y_2)} +\vartheta_2 \delta_{(x_2,y_1)}+(\kappa-\theta-\vartheta_2) \delta_{(x_2,y_2)} $$
where it holds that
\begin{itemize}
\item[(iii)] $\vartheta_1+\vartheta_2=\alpha$.
\item[(iv)] $\theta x_1 = \vartheta_1 y_1+(\theta-\vartheta_1) y_2$.
\item[(v)] $\theta-\vartheta_1+\kappa-\theta-\vartheta_2=\kappa-\alpha$.
\item[(vi)] $(\kappa-\theta) x_2 = \vartheta_2 y_1+(\kappa-\theta-\vartheta_2)y_2$.
\end{itemize}
It is not difficult to see that (v) follows from (iii) and that (vi) follows from (ii), (iii) and (iv).

   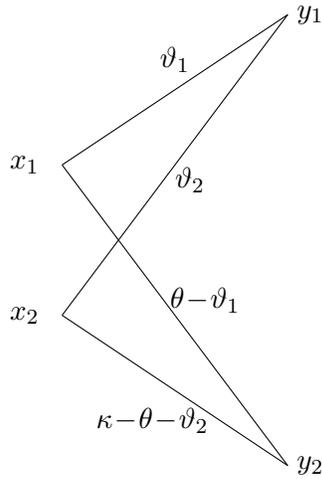
\begin{figure}[!htbp]\begin{center}

       \begin{tikzpicture}[scale=1]

               \draw(0,1)--(3,3);
            \draw(0,-1)--(3,3);
                \draw(0,1)--(3,-3);
            \draw(0,-1)--(3,-3);
          \node  at (-0.5,1) {$x_1$};
           \node  at (-0.5,-1) {$x_2$};
     
             \node at (3.3,3) {$y_1$};
           \node at (3.3,-3) {$y_2$};                  
     
          \node at (1.5,2.4) {$\vartheta_1$};
          \node at (1.2,-2.4) {$\kappa\!-\!\theta\!-\!\vartheta_2$};
       \node at (1.7,0.8) {$\vartheta_2$};
         \node at (1.9,-0.8) {$\theta\!-\!\vartheta_1$};

        \end{tikzpicture}
        \caption{Constructed transport in $\Q^*$.}
        \label{fig:construction}
  \end{center}
    \end{figure}

We claim now that the expected reward from this new transport is larger, i.e. we show that
\begin{equation}
c(\tilde{x},y_1) \alpha + c(\tilde{x},y_2)(\kappa-\alpha) \le c(x_1,y_1) \vartheta_1 + c(x_1,y_2)(\theta-\vartheta_1) + c(x_2,y_1) \vartheta_2 +c(x_2,y_2)(\kappa-\theta-\vartheta_2).\end{equation} 
In order to do this we first obtain from (ii) that
$$\frac{\theta}{\kappa} x_1+\frac{\kappa-\theta}{\kappa} x_2=  \tilde{x}$$
and thus by convexity of $c$ in the first component:
\begin{eqnarray*}
c(\tilde{x},y_1) \alpha + c(\tilde{x},y_2)(\kappa-\alpha) &\le& c(x_1,y_1) \frac{\theta}{\kappa} \alpha +c(x_1,y_2)  \frac{\theta}{\kappa} (\kappa-\alpha) \\
&&+ c(x_2,y_1)  \frac{\kappa-\theta}{\kappa}  \alpha + c(x_2,y_2) \frac{\kappa-\theta}{\kappa}(\kappa-\alpha).\end{eqnarray*} 
From (iii) and (iv) we get that
\begin{equation}
\vartheta_1 = \frac{\theta(x_1-y_2)}{y_1-y_2}, \quad \vartheta_2 = \alpha-\frac{\theta(x_1-y_2)}{y_1-y_2}.
\end{equation}
Thus it remains to show 
\begin{eqnarray*}
&&c(x_1,y_1) \frac{\theta}{\kappa} \alpha +c(x_1,y_2)  \frac{\theta}{\kappa} (\kappa-\alpha) + c(x_2,y_1)  \frac{\kappa-\theta}{\kappa}  \alpha + c(x_2,y_2) \frac{\kappa-\theta}{\kappa}(\kappa-\alpha)\\
&\le& c(x_1,y_1) \vartheta_1 + c(x_1,y_2)(\theta-\vartheta_1) + c(x_2,y_1) \vartheta_2 +c(x_2,y_2)(\kappa-\theta-\vartheta_2)\\
&=& c(x_1,y_1) \frac{\theta(x_1-y_2)}{y_1-y_2} + c(x_1,y_2)\Big(\theta-\frac{\theta(x_1-y_2)}{y_1-y_2}\Big) + c(x_2,y_1) \Big(\alpha-\frac{\theta(x_1-y_2)}{y_1-y_2}\Big)\\
&& +c(x_2,y_2)\Big(\kappa-\theta- \alpha+\frac{\theta(x_1-y_2)}{y_1-y_2}\Big).
\end{eqnarray*}
Rearranging the inequality is equivalent to
\begin{eqnarray*}
0&\le &c(x_1,y_1) \Big( \frac{\theta(x_1-y_2)}{y_1-y_2}-\frac{\theta}{\kappa} \alpha \Big) + c(x_1,y_2) \Big(\theta-\frac{\theta(x_1-y_2)}{y_1-y_2} -\frac{\theta}{\kappa} (\kappa- \alpha)   \Big)\\
&& + c(x_2,y_1)  \Big( \alpha-\frac{\theta(x_1-y_2)}{y_1-y_2}    - \frac{\kappa-\theta}{\kappa}\alpha\Big)+ c(x_2,y_2) \Big(\kappa-\theta- \alpha+\frac{\theta(x_1-y_2)}{y_1-y_2}- \frac{\kappa-\theta}{\kappa} (\kappa-\alpha)\Big).
\end{eqnarray*}
Simplifying the terms yields 
 \begin{eqnarray*}
0&\le &c(x_1,y_1) \Big( (x_1-y_2)-\frac{ \alpha}{\kappa} (y_1-y_2) \Big) - c(x_1,y_2) \Big((x_1-y_2)-\frac{ \alpha}{\kappa} (y_1-y_2)  \Big)\\
&& - c(x_2,y_1)  \Big((x_1-y_2)-\frac{ \alpha}{\kappa} (y_1-y_2) \Big)+ c(x_2,y_2) \Big((x_1-y_2)-\frac{ \alpha}{\kappa} (y_1-y_2)\Big).
\end{eqnarray*} 
Since the expression in brackets is positive if and only if  $x_1>\tilde{x}$ which is  true by assumption this is equal to
\begin{eqnarray*}
0\le c(x_1,y_1) - c(x_1,y_2) - c(x_2,y_1)  + c(x_2,y_2)
\end{eqnarray*} 
which is exactly the supermodularity from our assumption. Thus, when we replace $\Q$ by $\Q^*$ we have by definition a new martingale transport from $\tilde{\mu} :=\mu-\kappa \delta_x + \theta\delta_{x_1} +(\kappa-\theta) \delta_{x_2}$ to $\nu^*$ which still satisfies $\tilde{\mu}\le_c \mu^*$ but yields a higher expectation. This procedure can be repeated until all mass is concentrated on the strike prices. But this yields a unique measure which is $\mu^*$ and the statement is shown.
\end{proof}

\begin{example}
An important example for a directionally convex payoff function is the payoff of an Asian option $c(x,y)=(\frac12(x+y)-K)^+, x,y\ge0 $. It is obviously convex in both components and supermodularity follows from the fact that if $f:\R^2 \to \R$ is increasing and supermodular, then $\max\{f,c\}$ for all $c\in\R$ is supermodular (see e.g. \cite{b97}, Lemma 2.1 a)). For further constructions of supermodular functions, see Lemma 2.1 in \cite{b97}.
\end{example}

\begin{remark}
\begin{itemize}
\item[(a)] Suppose that \eqref{eq:prob} is solved by $\Q^*$ which has not necessarily discrete margins  $\mu$ and $\nu$. Then it is possible to approximate $\mu$ and $\nu$ by discrete measures $\mu^d \in P_1$ and $\nu^d\in P_2$ such that for the optimal martingale transport $\Q^d$
$$ \Big| \Eop_{\Q^d}[c(X,Y)]-\Eop_{\Q^*}[c(X,Y)]\Big|\le \varepsilon$$
for arbitrary $\varepsilon>0$. This follows since $(\mu,\nu) \mapsto \Eop_{\Q_l(\mu,\nu)}[c(X,Y)]$  is continuous with respect to weak convergence where $\Q_l$ is the left-monotone martingale transport (see \cite{J16}). It immediately implies  that Theorem \ref{th:main} is also true when we allow to optimize over all measures $\mu,\nu$ (not only discrete ones) which are consistent with observable call prices.
\item[(b)] Note that the method we use here can only be applied to upper bounds since there is no minimal element with respect to the convex order in the sets $P_i$.
\item[(c)] Once having the optimal transport for the upper bound, a super hedging strategy can be obtained from the dual problem (see e.g. \cite{blp13} for details).
\end{itemize}

\end{remark}

\section{Convergence of Price Bounds}
Of course we expect that the more call prices we observe, the closer the upper bound for the price of the exotic option is to the true upper bound which we could compute if we would know the true pricing measures for the asset. Indeed under the assumption that the strikes of observed call prices are given in a certain way,  it is not only possible to show the convergence of the upper price bound to the true upper price bound but also the speed of convergence can be estimated. We restrict here to the case that both margins have a compact support $[0,K]$. The situation with unbounded support is more complicated, see e.g. \cite{go17}. Moreover, we assume that in model $n$ call prices for strikes $k_{j}^{n} :=  \frac{j}{2^n}K, \;j=0,\ldots,2^n$ are observable for both margins at $t_1$ and $t_2$. We suppose that $\mu$ and $\nu$ are the true marginal distributions of $S_{t_1}$ and $S_{t_2}$ respectively. In the case of a finite number of observable call prices we take as marginal distribution the construction in Lemma \ref{lem:mu*extremal} which yields $\mu^*_n$ and $\nu^*_n$ irrespective of whether or not this really yields the upper bound (for $c$ directionally convex this is the case by Theorem \ref{theo:upperb}).   The lower index $n$ in the notation $\mu^*_n$ refers to the number of observable call prices. In order to establish convergence and the speed of convergence we need some more properties of $c$. More precisely we obtain:

  \begin{theorem}\label{Thm Convergence Speed General}
    Let $\mu\le_c\nu$ with $supp(\mu), supp(\nu)  \subset [0,K]$. Let $c:\R_{+}^2 \to \R$ be a Lipschitz continuous payoff function such that $c_{yy}$ exists. We denote by $\hat \Lambda$ the Lipschitz constant of $c$ and assume $\max\{\hat\Lambda, \sup_{(x,y) \in [0,K]^2} |c_{yy}(x,y)|\}\leq \Lambda$. Then, for any $n \in \N$, we have
    \begin{equation}\label{eq:WMc}
        \left|\sup_{\Q \in \mathcal{M}(\mu_n^{*},\nu_n^{*})} \Eop_{\Q}\left[c(X,Y) \right] - \sup_{\Q \in \mathcal{M}(\mu,\nu)}\Eop_{\Q}\left[c(X,Y) \right]\right|
        \leq \frac{M_c}{2^{n}},
        \end{equation}
       where $M_c= 12 K\tilde\Lambda$ with $\tilde \Lambda= \Lambda \cdot \max\{K,1\}$. If we additionally suppose that $C_{\mu},C_{\nu} \in \mathcal{C}^2(\R_+)$, then, for any $n \in \N$, we have
    \begin{equation}\label{eq:WMd}
      \left|\sup_{\Q \in \mathcal{M}(\mu_n^{*},\nu_n^{*})} \Eop_{\Q}\left[c(X,Y) \right] - \sup_{\Q \in \mathcal{M}(\mu,\nu)}\Eop_{\Q}\left[c(X,Y) \right]\right| \leq \frac{M_d}{2^{n+1}},
       \end{equation} 
     where $M_d=(7 T_{\mu} + 5 T_\nu )K^2\tilde\Lambda$ with $T_\mu= \sup\limits_{\kappa \in [0,K]}|C_{\mu}''(\kappa)|$ and $T_{\nu}=\sup\limits_{\lambda \in [0,K]}|C_{\nu}''(\lambda)|$.
\end{theorem}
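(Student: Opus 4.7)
The plan is to bound $|V(\mu_n^*,\nu_n^*) - V(\mu,\nu)|$, where $V(\alpha,\beta) := \sup_{\Q \in \mathcal{M}(\alpha,\beta)} \Eop_\Q[c(X,Y)]$, by a sum of two Wasserstein distances $W(\mu,\mu_n^*)$ and $W(\nu,\nu_n^*)$, and then to invoke Theorem \ref{Thm Wasserstein Distance Estimates}. The key ingredients are Strassen's theorem (Lemma \ref{lem:cx1}) applied to the convex-order relations $\mu \leq_c \mu_n^*$ and $\nu \leq_c \nu_n^*$ from Lemma \ref{lem:mu*extremal}, the Lipschitz and $c_{yy}$-bound assumptions on $c$, and a triangle inequality separating the perturbation of the first and second marginals:
\[
  |V(\mu_n^*,\nu_n^*) - V(\mu,\nu)| \;\leq\; |V(\mu_n^*,\nu_n^*) - V(\mu,\nu_n^*)| + |V(\mu,\nu_n^*) - V(\mu,\nu)|.
\]

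For the second summand, I would take any $\Q \in \mathcal{M}(\mu,\nu)$ that is $\varepsilon$-optimal, glue it with the Strassen martingale kernel from $\nu$ to $\nu_n^*$ (drawing $Y^*$ given $Y$, conditionally independent of $X$), and observe that by the tower property $\Eop[Y^*\mid X] = \Eop[Y\mid X] = X$, so $(X,Y^*) \in \mathcal{M}(\mu,\nu_n^*)$. A Taylor expansion of $c(X,Y^*)$ around $Y$ and the fact that $\Eop[Y^*-Y\mid Y]=0$ eliminates the first-order term, leaving a remainder bounded by $\tfrac{1}{2}\sup|c_{yy}|\cdot \Eop[(Y^*-Y)^2] = \tfrac{1}{2}\sup|c_{yy}|\cdot \left(\int y^2\, d\nu_n^* - \int y^2\, d\nu\right)$. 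Since $y\mapsto y^2$ is $2K$-Lipschitz on $[0,K]$, the Kantorovich--Rubinstein dual representation from Remark \ref{rem:W}(b) controls this expression by $K\Lambda\cdot W(\nu,\nu_n^*)$. The reverse direction of the inequality is handled symmetrically by starting from an $\varepsilon$-optimal element of $\mathcal{M}(\mu,\nu_n^*)$ and gluing with the same Strassen kernel in the opposite direction.

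For the first summand, the analogous gluing from $\Q^n \in \mathcal{M}(\mu_n^*,\nu_n^*)$ to $\mathcal{M}(\mu,\nu_n^*)$ uses the \emph{reverse} Strassen kernel on the $X$-side (drawing $X$ given $X^*$, conditionally independent of $Y^*$); this preserves the martingale property since $\Eop[Y^*\mid X] = \Eop[\Eop[Y^*\mid X^*]\mid X] = \Eop[X^*\mid X] = X$. The objective difference is controlled by $\hat\Lambda\cdot \Eop|X-X^*|$ through Lipschitz continuity, and the Strassen martingale coupling on $\R$ can be chosen so that $\Eop|X-X^*|$ is comparable to $W(\mu,\mu_n^*)$ up to an absolute constant. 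Summing the two contributions and substituting the Wasserstein bounds from Theorem \ref{Thm Wasserstein Distance Estimates} — $W(\mu,\mu_n^*), W(\nu,\nu_n^*) \leq K/2^n$ in general, and $T_\mu K^2/2^{n+1}, T_\nu K^2/2^{n+1}$ when $C_\mu, C_\nu \in \mathcal{C}^2$ — yields \eqref{eq:WMc} and \eqref{eq:WMd}, with the explicit prefactors $M_c = 12K\tilde\Lambda$ and $M_d = (7T_\mu+5T_\nu)K^2\tilde\Lambda$ obtained from careful bookkeeping of the constants in each subsum.

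The main obstacle is the asymmetry of Strassen's theorem: a martingale kernel always points from the convex-order-smaller measure to the larger one, so gluing in the ``right'' direction preserves martingality while the reverse does not. For the $\nu$-perturbation this asymmetry is absorbed by the Taylor expansion, which requires precisely the $\sup|c_{yy}|$ hypothesis and converts a first-order-in-$|Y^*-Y|$ error into a second-order-in-$(Y^*-Y)$ error that combines cleanly with $W_1$ via the $2K$-Lipschitz square. For the $\mu$-perturbation, where $c_{xx}$ is not assumed bounded, the bound must come directly from Lipschitz continuity, and a naive second-moment argument would only yield the suboptimal rate $O(2^{-n/2})$; hence the crux is either to exhibit an $L^1$-near-optimal martingale coupling between $\mu$ and $\mu_n^*$ that is comparable to $W(\mu,\mu_n^*)$, or equivalently to carry out the bound via MOT duality with Lipschitz dual variables. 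The $C^2$-refinement \eqref{eq:WMd} then follows immediately by inserting the sharper Wasserstein estimate into the same framework.
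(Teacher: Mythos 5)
Your high-level reduction (bound the price-bound difference by $W(\mu,\mu_n^*)$ and $W(\nu,\nu_n^*)$, then apply Theorem \ref{Thm Wasserstein Distance Estimates}) is the same final step as in the paper, and two of your building blocks are sound: the forward gluing on the $Y$-side together with the Taylor/Kantorovich--Rubinstein argument correctly gives $\sup_{\mathcal{M}(\mu,\nu_n^*)}\Eop[c]\ge \sup_{\mathcal{M}(\mu,\nu)}\Eop[c]-K\Lambda\,W(\nu,\nu_n^*)$, and your computation that the \emph{reverse} disintegration on the $X$-side preserves martingality, $\Eop[Y^*\mid X]=\Eop[\Eop[Y^*\mid X^*]\mid X]=\Eop[X^*\mid X]=X$, is correct. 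However, there are two genuine gaps. First, for the $\mu$-perturbation you need a martingale coupling of $\mu$ and $\mu_n^*$ with $\Eop|X-X^*|\le C\,W(\mu,\mu_n^*)$; Strassen's theorem (Lemma \ref{lem:cx1}) only gives existence of \emph{some} martingale coupling and says nothing about its $L^1$ cost, and the comparability of the minimal martingale $L^1$ cost with $W_1$ is a nontrivial fact that you neither prove nor cite --- you yourself call it ``the crux'' and leave it as an either/or. Second, each summand in your triangle inequality is only bounded in \emph{one} direction by the constructions you describe: the ``symmetric'' reverse gluing on the $Y$-side yields a coupling with the correct marginals $(\mu,\nu)$ but with $\Eop[Y\mid X]=\Eop[\Eop[Y\mid Y^*]\mid X]\neq X$ in general, so it leaves $\mathcal{M}(\mu,\nu)$, and the Taylor expansion repairs the objective, not the violated constraint; likewise the forward gluing on the $X$-side fails because $\Eop[Y^*\mid X^*]=\Eop[X\mid X^*]\neq X^*$. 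Since $c$ is not assumed convex in $y$ here, neither missing direction is automatic, and note that the two gluings that do work both land in $\mathcal{M}(\mu,\nu_n^*)$ and only give two lower bounds on that intermediate value, which does not control $\bigl|\sup_{\mathcal{M}(\mu_n^*,\nu_n^*)}\Eop[c]-\sup_{\mathcal{M}(\mu,\nu)}\Eop[c]\bigr|$.

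The paper circumvents both issues by working on the dual side: by Theorem 1.1, Theorem 2.4 and Remark 2.5 of \cite{lim}, the two primal problems share the same dual feasible set $\mathcal{D}$ (it does not depend on the marginals), and --- using exactly the hypothesis $\sup|c_{yy}|\le\Lambda$ to produce the Lipschitz function $u(y)=-\tfrac{\Lambda}{2}y^2$ making $c(x,\cdot)+u$ concave --- both problems admit dual optimizers whose components are Lipschitz with constants $7\tilde\Lambda$ and $5\tilde\Lambda$. Plugging the optimizer of one problem into the other bounds \emph{both} directions of the difference by $7\tilde\Lambda\,W(\mu,\mu_n^*)+5\tilde\Lambda\,W(\nu,\nu_n^*)$ via Remark \ref{rem:W}(b), which is also where the specific constants $M_c=12K\tilde\Lambda$ and $M_d=(7T_\mu+5T_\nu)K^2\tilde\Lambda$ come from; your bookkeeping could not reproduce them. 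To complete your primal route you would have to prove (or cite) a quantitative $L^1$ martingale-coupling inequality and supply a genuine argument for the reverse inequalities --- the second of which is essentially the duality argument you deferred.
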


\begin{proof}
We may reformulate the difference in \eqref{eq:WMc} as the difference of the values of the dual problems. Note that $c$ is bounded on $[0,K]^2$.  Thus we obtain with Theorem 1.1 in \cite{lim}
     \begin{align*}
       \sup_{\Q \in \mathcal{M}(\mu_n^{*},\nu_n^{*})}\Eop_{\Q}\left[c(X,Y) \right] &= \inf_{(\varphi,\psi) \in \mathcal{D}}\left\{ \int_{\R_+}\varphi(x)\mu_n^{*}(d x) + \int_{\R_+}\psi(y)\nu_n^{*}(d y) \right\},\\ \sup_{\Q \in \mathcal{M}(\mu,\nu)}\Eop_{\Q}\left[c(X,Y) \right] &= \inf_{(\varphi,\psi) \in \mathcal{D}}\left\{ \int_{\R_+}\varphi(x)\mu(d x) + \int_{\R_+}\psi(y)\nu(d y) \right\},
     \end{align*}
where \begin{eqnarray*}
\mathcal{D} &:=& \{ (\varphi,\psi) : \varphi^+ \in L^1(\R,\mu), \psi^+\in L^1(\R,\nu), \mbox{ and for some } h\in L^\infty(\R), \\
&& \hspace*{1cm}\varphi(x)+\psi(y)+h(x)(y-x) \ge c(x,y), (x,y)\in \R^2\}
\end{eqnarray*}
    Now let us apply Theorem 2.4 and Remark 2.5 in \cite{lim}. For this purpose, we have to  prove that the conditions are satisfied. By assumption and by construction respectively, we have that ${\mu\leq_c \nu}$ and $\mu_n^{*} \leq_c \nu_n^{*}$ are compactly supported. The payoff function $c$ is Lipschitz continuous on $[0,K]\times [0,K]=\mathrm{conv}(supp(\nu))\times \mathrm{conv}(supp(\nu))$ with constant $\hat \Lambda \leq \Lambda$. It remains to show that there is a Lipschitz continuous function ${u:[0,K]=\mathrm{conv}(supp(\nu)) \to \R}$ such that $y \mapsto c(x,y)+u(y)$
    is concave on $[0,K]$ for $\mu$-almost every $x\in \R_+$. As $c_{yy}\leq \Lambda$ on $[0,K]^2$, it is clear that $u(y):= -\frac{\Lambda}{2}y^2$ is such a function with Lipschitz constant $\Lambda K$. We define $\tilde \Lambda:=\Lambda \cdot \max\left\{ K,1 \right\}$.

    Thus, by Theorem 2.4 in \cite{lim} there exist solutions $(\varphi^{*},\psi^{*})$ and $(\varphi_n^{*},\psi_n^{*})$ for the dual problems with respect to $(\mu,\nu)$ and $(\mu_n^{*},\nu_n^{*})$ respectively. 
 By Remark 2.5 in \cite{lim} $\varphi^{*} $ and $\varphi_{n}^{*}$ are Lipschitz continuous with constant $7\tilde \Lambda$, and $\psi^{*}$ and $\psi_n^{*}$ are Lipschitz continuous with constant $5 \tilde\Lambda$. Hence, we have 
    \begin{align*}
        &\sup_{\Q \in \mathcal{M}(\mu_n^{*},\nu_n^{*})}\Eop_{\Q}\left[c(X,Y) \right] - \sup_{\Q \in \mathcal{M}(\mu,\nu)}\Eop_{\Q}\left[c(X,Y) \right]        
      \\
      &=\inf_{(\varphi,\psi) \in \mathcal{D}}\left\{ \int_{\R_+}\varphi(x)\mu_n^{*}(d x) + \int_{\R_+}\psi(y)\nu_n^{*}(d y) \right\}      
     \\ & \quad - \inf_{(\varphi,\psi) \in \mathcal{D}}\left\{ \int_{\R_+}\varphi(x)\mu(d x) + \int_{\R_+}\psi(y)\nu(d y) \right\}
      \\
      &\leq \int_{\R_+}\varphi^{*}(x)\mu_n^{*}(d x) + \int_{\R_+}\psi^{*}(y)\nu_n^{*}(d y)
      \\
      &\quad  -\left( \int_{\R_+}\varphi^{*}(x)\mu(d x) + \int_{\R_+}\psi^{*}(y)\nu(d y) \right)
      \\
      &= \int_{\R_{+}}\varphi^{*}(x)\left( \mu_{n}^{*}-\mu \right)(d x) + \int_{\R_+}\psi^{*}(y)\left( \nu_{n}^{*}-\nu \right)(d y)
      \\
      &\leq 7\tilde\Lambda W( \mu,\mu_{n}^{*}) + 5\tilde\Lambda W(\nu, \nu_{n}^{*}),
    \end{align*}
    where in the last inequality we scale the integrands by their Lipschitz constants and then use the dual representation of the Wasserstein distance in Remark \ref{rem:W}. Completely analogous, but using $\varphi^{*}_n$ and $\psi^{*}_n$ in the first inequality instead of $\varphi^{*}$ and $\psi^{*}$, we obtain
    \[ \sup_{\Q \in \mathcal{M}(\mu,\nu)}\Eop_{\Q}\left[c(X,Y) \right]   -\sup_{\Q \in \mathcal{M}(\mu_n^{*},\nu_n^{*})}\Eop_{\Q}\left[c(X,Y) \right]         
        \leq 7\tilde\Lambda W( \mu,\mu_{n}^{*}) + 5\tilde\Lambda W( \nu,\nu_{n}^{*} ).
    \]
    Using the estimates in \eqref{Thm Eq 1 Wasserstein Distance Estimates} and \eqref{Thm Eq 3 Wasserstein Distance Estimates}, we have the claimed convergence speed estimates.
\end{proof}

\begin{remark}
Note that the rate of convergence in Theorem \ref{Thm Convergence Speed General} cannot be improved. We show this by an example in the appendix. 
\end{remark}

\section{Numerical Examples}\label{sec:numerics}
\subsection{An Upper Price Bound Example with Real Data}
We use the theory developed in this paper to compute an upper price bound on an Asian option on the SAP stock. We have got the  call price observations in Table \ref{tab:SAP} taken May, 29th, 2019 from www.boerse.de. Note that we are only able to observe bid and ask prices and took a reasonable price from this interval as the net price such that the properties in Lemma \ref{Lem Call Prop 1} are satisfied. Moreover, the interest rate is not zero as assumed in our analysis. Prices are quoted in euros. The stock price of SAP at this time was 110 euros. As different maturities we have taken June, 17th, 2019 and August, 12th, 2019. 

   \begin{table}[!htbp]
\begin{tabular}{|c|c|c|c|c|c|c|c|c|}
\hline
strike & 90 & 95 & 100 & 105& 110 & 115 & 120 & 125\\
\hline 
Call price with maturity 17.6. & 2.2825 & 1.78 & 1.265 & 0.78 & 0.345 & 0.06 & 0.025 & 0.01 \\
\hline
Call price with maturity 12.8. & 2.405 & 1.907 & 1.414 & 0.976 & 0.613 & 0.365 & 0.2005 & 0.11 \\
\hline
\end{tabular}\\[0.4cm]

  \caption{Observed call prices on May 29th, 2019 in euro.}
  \label{tab:SAP}
\end{table}

Figure \ref{fig:SAP} shows the interpolated call price function. Using Lemma \ref{lem:mu*extremal} we can derive $\mu^*$ and $\nu^*$. What we clearly see in the picture is that $C_{\mu^*}\le C_{\nu^*}$ hence $\mu^* \le_c \nu^*$ follows from Lemma \ref{lem:cx2}. Moreover, the two functions are candidate functions in the sense of Definition \ref{Def Candidate}. Thus Corollary 4.1 in \cite{dh07} implies that the market is free of arbitrage.

   \begin{figure}[!htbp]\begin{center}

       \begin{tikzpicture}[scale=1]

               \draw(0,2.8)--(1,2.28);
            \draw(1,2.28)--(2,1.78);
                \draw(2,1.78)--(3,1.265);
                        \draw(3,1.265)--(4,0.78);
           \draw(4,0.78)--(5,0.345);
                \draw(5,0.345)--(6,0.06);
                       \draw(6,0.06)--(7,0.025);
           \draw(7,0.025)--(8,0.01);

     \draw[dashed, red](0,2.91)--(1,2.405);
            \draw[dashed, red](1,2.405)--(2,1.907);
                \draw[dashed, red](2,1.907)--(3,1.414);
                    \draw[dashed, red](3,1.414)--(4,0.976);
            \draw[dashed, red](4,0.976)--(5,0.613);
                \draw[dashed, red](5,0.613)--(6,0.365);
               \draw[dashed, red](6,0.365)--(7,0.2005);
            \draw[dashed, red](7,0.2005)--(8,0.11);
      
           \draw(0,0)--(8,0);
 \draw(0,0.1)--(0,-0.1);
  \node at (0,-0.4) {$85$};
 \draw(1,0.1)--(1,-0.1);
  \node at (1,-0.4) {$90$};
 \draw(2,0.1)--(2,-0.1);
  \node at (2,-0.4) {$95$};
 \draw(3,0.1)--(3,-0.1);
  \node at (3,-0.4) {$100$};
 \draw(4,0.1)--(4,-0.1);
  \node at (4,-0.4) {$105$};
 \draw(5,0.1)--(5,-0.1);
  \node at (5,-0.4) {$110$};
 \draw(6,0.1)--(6,-0.1);
  \node at (6,-0.4) {$115$};
 \draw(7,0.1)--(7,-0.1);
  \node at (7,-0.4) {$120$};
 \draw(8,0.1)--(8,-0.1);
  \node at (8,-0.4) {$125$};
         
        \end{tikzpicture}
        \caption{Interpolated call prices on SAP stock with maturity 17.6.19 (solid) and 12.8.19 (dashed).} \label{fig:SAP}
  \end{center}
    \end{figure}
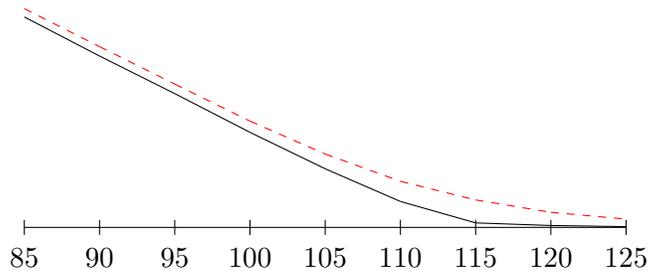
    
 We obtain 
 
    \begin{table}[!htbp]
\begin{tabular}{|c|c|c|c|c|c|c|c|c|}
\hline  
 atom & 90 & 95 & 100 & 105& 110 & 115 & 120 & 125\\ \hline
$\mu^*(\cdot)$ &  0.895 & 0.002 & 0.006 & 0.01 & 0.03 & 0.05 & 0.004 & 0.003\\
\hline
$\nu^*(\cdot)$ &  0.9004 & 0.001 & 0.011 & 0.015 & 0.023 & 0.0167 & 0.0148 & 0.0181\\
\hline
\end{tabular}\\[0.4cm]

  \caption{Density of marginal distributions.}
\end{table}    

Note that the expectation of $\mu^*$ and $\nu^*$ is $92.295$.  This should in theory be equal to the initial stock price of 110, as already mentioned we are not able to observe net prices.   We want to find an upper price bound for an Asian option on the SAP stock with payoff $c(x,y)=\Big(\frac12 (x+y)-K\Big)^+$ where $x$ is the stock price at 17.6.2019 and $y$ the stock price at 12.08.2019. When we choose $k=120$ we obtain the price bound $0.02357$. Note that we assume here that there is no discounting which is of course not true in reality, but the maturity of the option is not too far away. When we do not consider the martingale condition and just compute the upper price bound using the margins $\mu^*$ and $\nu^*$ we obtain for the same option the price $0.025$. Some earlier papers (e.g. \cite{cdcv08}) have considered bounds where the martingale condition is not respected. So in this example the additional condition does not improve the price bound much. But this is certainly due to the fact the two time points are pretty close together. 
    
\subsection{Numerical Convergence}
    Let us now discuss the convergence speed of the price bound approximation for different compactly supported theoretical marginals and payoff functions  numerically. Therefore, we calculate the approximating upper price bounds for several $n \in \N$ as well as the real upper price bounds as far as possible. Additionally, we calculate the corresponding normalized price bound differences $$d_n := 2^{2n} \Big(\sup_{\Q \in \mathcal{M}(\mu_n^{*},\nu_n^{*})} \Eop_{\Q}\left[c(X,Y) \right] - \sup_{\Q \in \mathcal{M}(\mu,\nu)}\Eop_{\Q}\left[c(X,Y) \right]\Big) .$$
In the case of uniform distributed margin, it is possible to compute the real upper price bound  explicitly (see e.g.  \cite{hlt16}) when $c$ has certain properties (like the Martingale Spence Mirrlees condition).

 In what follows $ \mathcal{U}[a,b]$ denotes the uniform distribution on interval $[a,b]$.

   \begin{enumerate}
     \item[1.] Let $\mu \sim \mathcal{U}[1,3], \nu \sim \mathcal{U}[0,4]$, i.e. $k=4$.  Here, we partition the support of $\nu$ in maximally $2048$ intervals, i.e. we have a difference of $\frac{1}{512}$ between two partition points. We consider different exotic options. 
     \begin{enumerate}
       \item $c\left(x,y\right)=xy^2$. Note that $c$ is directionally convex on $\R^2_+.$ The real upper price bound can be computed explicitly here, since the upper bound is attained at the left-monotone martingale transport (see e.g.  \cite{hlt16}). It is given by \begin{align*}
       \sup_{\Q \in \mathcal{M}\left(\mu,\nu\right)}\Eop_{\Q}\left[c\left(X,Y\right) \right] = \Eop_{\mu}\left[X\left(\frac{3}{4}\left(\frac{3}{2}X+\frac{1}{2}\right)^2+\frac{1}{4}\left(\frac{3}{2}-\frac{1}{2}X\right)^2\right)\right] = 12.5.
       \end{align*}
       For all $n=3,\ldots,11$, we calculate the approximate upper price bound $P(\mu_n^*, \nu_n^*)=  \sup_{\Q \in \mathcal{M}\left(\mu_n^*,\nu_n^*\right)}\Eop_{\Q}\left[c\left(X,Y\right) \right]$ and $d_n.$ This yields the results of Table \ref{Tab Uni 0134 xyy}.
       \begin{table}[!htbp]\scriptsize
         \centering \renewcommand{\arraystretch}{1.2}
           \begin{tabular}{|c|c|c|c|c|c|c|c|c|c|}
      \hline
      n &  3 & 4 & 5 & 6 & 7 & 8 & 9 & 10 & 11 \\
      \hline
      $P(\mu_n^*,\nu_n^*)$  & 12.808 & 12.57 & 12.517 & 12.504 & 12.501 & 12.5002 & 12.50006 & 12.50002 & 12.500004 \\ \hline
      $d_n$  & 19.7 & 17.9 & 17.1 & 16.6 & 16.4 & 16.35& 16.3 & 16.274 & 16.274 \\ \hline
           \end{tabular}
           \caption{Approximation results in the case 1.a)}
       \label{Tab Uni 0134 xyy}
       \end{table}

     \item $c\left(x,y\right)=\exp(x)\cdot y^2$. Note that $c$ is directionally convex on $\R^2_+.$ The real upper price bound is $ 61.8801.$ The approximate upper price bounds $P(\mu_n^*, \nu_n^*)$ are given in Table \ref{Tab Uni 0134 exyy}.
        \begin{table}[!htbp]\scriptsize
\centering\renewcommand{\arraystretch}{1.2}

      \begin{tabular}{|c|c|c|c|c|c|c|c|c|c|}
      \hline
      n & 3 & 4 & 5 & 6 & 7 & 8 & 9 & 10 & 11 \\
      \hline
      $P(\mu_n^*,\nu_n^*)$ &	65.8620&	62.7911&	62.0990&	61.9338&	61.8934&	61.8834&	61.8810&	61.8803&	61.8802
 \\ \hline
      $d_n$ &	254.8&	233.2&	224.1&	219.8&	217.7&	216.7&	216.1	&215.8	&215.5

 \\ \hline

       \end{tabular}
           \caption{Approximation results in the case 1.b)}
       \label{Tab Uni 0134 exyy}
     \end{table}
            \end{enumerate} 
            \item[2.] Let $\mu \sim \mathcal{U}[9,11], \nu \sim \mathcal{U}[0,20]$, i.e. $k=20$. Here, we partition the support of $\nu$ in maximally $2048$ intervals, i.e. we have a difference of $\frac{5}{512}$ between two partition points.
     \begin{enumerate}
       \item $c\left(x,y\right)=xy^2$. The real upper price bound is \begin{align*}
      &\sup_{\Q \in \mathcal{M}\left(\mu,\nu\right)}\Eop_{\Q}\left[c\left(X,Y\right) \right] =\Eop_{\mu}\left[X\left(\frac{11}{20}\left(\frac{11}{2}X-\frac{81}{2}\right)^2+\frac{9}{20}\left(\frac{99}{2}-\frac{9}{2}X\right)^2\right)\right] = 1356.5 .
       \end{align*}
       For all $n=3,\ldots,11$, we calculate the approximate upper price bound $P(\mu_n^*, \nu_n^*)$ and $d_n.$ This yields the results of Table \ref{Tab Uni 091120 xyy}.
        \begin{table}[!htbp]\scriptsize
\centering
\renewcommand{\arraystretch}{1.2}
      \begin{tabular}{|c|c|c|c|c|c|c|c|c|c|}
      \hline
      n  & 3 & 4 & 5 & 6 & 7 & 8 & 9 & 10 & 11 \\
      \hline
      $P(\mu_n^*,\nu_n^*)$	&1421	&1367.2&	1359.35&	1357.206	& 1356.676&	1356.543&	1356.511&	1356.503&	1356.501

 \\ \hline
      $d_n$	&4133&	2739	&2922&	2894&	2882	& 2805&	2826&	2808&	2819

 \\ \hline

       \end{tabular}
           \caption{Approximation results in the case 2.a)}
     \label{Tab Uni 091120 xyy}
     \end{table}




     \item $c\left(x,y\right)=\exp(x)\cdot y^2$. The real upper price bound is $ 4041627.609.$ The approximate upper price bounds $(\mu_n^*, \nu_n^*)$ are given in Table \ref{Tab Uni 091120 exyy}.
        \begin{table}[!htbp]\tiny
\centering
\renewcommand{\arraystretch}{1.2}
      \begin{tabular}{|c|c|c|c|c|c|c|c|c|}
      \hline
      n  
      & 4 & 5 & 6 & 7 &8 & 9 & 10 & 11  \\
      \hline
      $P(\mu_n^*,\nu_n^*)$ 	
      &	4826637&	4236165&	4093466	& 4054268 &4044652	&4042391&	4041818	&4041675	
 \\ \hline
      $d_n$ 	&	
      	200962386	& 199206022&	212331405&	207103361 &	198219006&	200101331&	199976026&	200114438
 \\ \hline
       \end{tabular}
           \caption{Approximation results in the case 2.b)}
       \label{Tab Uni 091120 exyy}
     \end{table}
            \end{enumerate}
\end{enumerate}

\newpage
\section*{Appendix}
\subsection{Proof of Theorem \ref{Thm Wasserstein Distance Estimates}}
 \begin{proof}
  By Remark \ref{rem:W}, we have
    \[
        W( \mu,\mu^{*} )=\int_{-\infty}^{\infty} \left| F_\mu(t)-F_{\mu^*}(t)\right| d t.
        \]
        In order to calculate the integral, we plug in the distribution function representations  \eqref{eq:BL} using the call option price function $C_{\mu}$ and use Lemma \ref{lem:mu*} to obtain $F_{\mu^*}$. Then we have
      \[
        W( \mu,\mu^{*} ) = \int_{0}^{K}\left|1+C_\mu'(t+)- F_{\mu^*}(t) \right| d t
        = \sum_{j=0}^{2^{n}-1}\int_{k_j}^{k_{j+1}} \left| C_{\mu}'(t+)-\frac{C_{\mu}(k_{j+1})-C_{\mu}(k_j)}{k_{j+1}-k_j} \right| d t.
         \]
         Note that $C_\mu^*(k_j)=C_\mu(k_j)$ for $j=0,\ldots 2^n$.
         In the following, let us abbreviate $m_{j}:= \frac{C_{\mu}(k_{j+1})-C_{\mu}(k_j)}{k_{j+1}-k_j}$.
    Since $F_{\mu^*}$ is constant on $[k_j,k_{j+1})$ for all $j=0,\ldots,2^{n}-1$ there is a $\theta_j\in [k_j,k_{j+1}]$ such that for all $t \in [k_j,\theta_j)$, we have $F_{\mu}(t) \leq F_{\mu^*}(t),$ or equivalently $C_{\mu}'(t+)\leq m_j$, and for all $t \in [\theta_j,k_{j+1})$, we have $ F_{\mu}(t) \geq F_{\mu^*}(t),$ or equivalently $C_{\mu}'(t+)\geq m_j$. Thus, we have
    \begin{equation}\label{Thm Proof Wasserstein Distance Estimates Origin}
      W( \mu,\mu^{*} ) = \sum_{j=0}^{2^{n}-1}\left[ \int_{k_j}^{\theta_j}\left( m_{j} - C_{\mu}'(t+)\right) d t + \int_{\theta_j}^{k_{j+1}}\left( C_{\mu}'(t+)-m_{j}\right) d t \right].
    \end{equation}
   We stress that the set of points $t \in \R_{+}$ such that $C_{\mu}'(t-)\neq C_{\mu}'(t+)$ is a Lebesgue null set. Hence, integrating over the right derivative $C_{\mu}'(\cdot+)$, we receive $C_{\mu}(\cdot)$. Based on \eqref{Thm Proof Wasserstein Distance Estimates Origin}, we thus obtain
    \begin{align*}
    W(\mu,\mu^{*}) &= \sum_{j=0}^{2^n-1} \bigg[ m_{j} \left(\theta_j-k_{j}\right) - \left( C_{\mu}(\theta_j)-C_{\mu}(k_{j}) \right)
   + \left( C_{\mu}(k_{j+1})-C_{\mu}(\theta_j) \right)-m_{j}\left( k_{j+1}-\theta_j \right)\bigg]
    \\
    &= \sum_{j=0}^{2^n-1} \Bigg[ \frac{C_{\mu}(k_{j+1})-C_{\mu}(k_{j})}{k_{j+1}-k_{j}}\left(\theta_j-k_{j} \right) - \left( C_{\mu}(\theta_j)-C_{\mu}(k_{j}) \right)
    \\
    & \quad \quad + \left( C_{\mu}(k_{j+1})-C_{\mu}(\theta_j) \right)-\frac{C_{\mu}(k_{j+1})-C_{\mu}(k_{j})}{k_{j+1}-k_{j}}\left( k_{j+1}-\theta_j \right)\Bigg]
    \\
    &= \sum_{j=0}^{2^n-1}  \frac{1}{k_{j+1}-k_{j}} \bigg[\left( C_{\mu}(k_{j+1})-C_{\mu}(k_{j}) \right)\left(\theta_j-k_{j} \right)
     - \left( C_{\mu}(\theta_j)-C_{\mu}(k_{j}) \right)\left( k_{j+1}-k_{j} \right)
    \\
    &  \quad + \left( C_{\mu}(k_{j+1})-C_{\mu}(\theta_j) \right)\left( k_{j+1}-k_{j} \right)
      -\left( C_{\mu}(k_{j+1})-C_{\mu}(k_{j}) \right)\left( k_{j+1}-\theta_j\right)\bigg].
    \end{align*}
         If we now add a suitable zero and rearrange the terms, then we obtain
    \begin{align*}
    W(  \mu,\mu^{*} )&= \sum_{j=0}^{2^n-1}\frac{2}{k_{j+1}-k_{j}}\bigg[\left( C_{\mu}(k_{j+1})-C_{\mu}(\theta_j) \right)\left(\theta_j-k_{j} \right)
     - \left( C_{\mu}(\theta_j)-C_{\mu}(k_{j}) \right)\left( k_{j+1}-\theta_j \right) \bigg]
    \\
    &=2\sum_{j=0}^{2^n-1}\lambda_jC_{\mu}(k_{j+1}) + (1-\lambda_{j})C_{\mu}(k_{j}) - C_{\mu}\left( \theta_j\right)
    \\
    &=2 \sum_{j=0}^{2^{n}-1}C_{\mu^{*}}(\theta_j)- C_{\mu}(\theta_j),
    \end{align*}
    where we use $\lambda_{j}:= \frac{\theta_j-k_{j}}{k_{j+1}-k_{j}}$ and the linearly interpolating definition of $C_{\mu^{*}}$. By the choice of $\theta_j$, we have that the slope of the line through $C_\mu(k_j)$ and $C_{\mu}(k_{j+1})$ is  in $[C_{\mu}'(\theta_j-),C_{\mu}'(\theta_j+)]$, i.e. it equals $C_\mu'(\theta_j)$ whenever the derivative exists. In particular, the distance of $C_{\mu^{*}}$ and $C_{\mu}$ on $[k_j,k_{j+1})$ is maximal in $\theta_j$. That is,
    \[
        \theta_j=argmax_{t \in [k_{j}, k_{j+1})}\left|C_{\mu^*}(t)-C_{\mu}(t)\right|.
    \]
    Thus, we have the desired representation
    \[
      W( \mu,\mu^{*} )=2\cdot \sum_{j=0}^{2^n-1}\sup_{t\in [k_j,k_{j+1})}|C_{\mu_{n}^{*}}(t)-C_{\mu}(t)|.
    \]

    Now we turn to the estimate in \eqref{Thm Eq 1 Wasserstein Distance Estimates}. For this purpose, we consider the slope  $C_{\mu}'(t+)$ for $t \in [k_j,k_{j+1})$. In particular, we have
    \[
          C_{\mu}'(t+)
      \begin{cases}
        \geq C_{\mu}'(k_j+),  & t \in [k_j,\theta_j)        \\
        \leq C_{\mu}'(k_{j+1}+), & t \in [\theta_j,k_{j+1}). 
      \end{cases}
        \]
        Using equation \eqref{Thm Proof Wasserstein Distance Estimates Origin}, we get
    \begin{align}\label{Thm Proof Wasserstein Distance Estimates Case Differentiable}
  \notag     W( \mu,\mu^{*} )
        &\leq \sum_{j=0}^{2^{n}-1}\left[ \int_{k_j}^{\theta_j}\left( m_{j}- C_{\mu}'(k_{j}+)\right) d t + \int_{\theta_j}^{k_{j+1}}\left(C_{\mu}'(k_{j+1}+)-m_{j}\right) d t \right]
      \\ \notag
      &= \sum_{j=0}^{2^{n}-1}\Bigg[\int_{k_j}^{\theta_j}\left( \frac{C_{\mu}(k_{j+1})-C_{\mu}(k_{j})}{k_{j+1}-k_{j}}- C_{\mu}'(k_{j}+)\right) d t
      \\ \notag       &\quad 
    + \int_{\theta_j}^{k_{j+1}}\left(C_{\mu}'(k_{j+1}+)-\frac{C_{\mu}(k_{j+1})-C_{\mu}(k_{j})}{k_{j+1}-k_j}\right) d t\Bigg]
      \\ \notag
      &= \sum_{j=0}^{2^{n}-1}\Bigg[\left( \frac{C_{\mu}(k_{j+1})-C_{\mu}(k_{j})}{k_{j+1}-k_{j}}- C_{\mu}'(k_{j}+)\right) \left(\theta_j-k_{j}  \right)
      \\
      &\quad + \left(C_{\mu}'(k_{j+1}+)-\frac{C_{\mu}(k_{j+1})-C_{\mu}(k_{j})}{k_{j+1}-k_{j}}\right)\left( k_{j+1}-\theta_j\right)\Bigg].
    \end{align}
 When we apply the inequalities $\theta_j \leq k_{j+1}$ and $-\theta_j \leq - k_j$ in \eqref{Thm Proof Wasserstein Distance Estimates Case Differentiable} (note that the terms in brackets are non-negative), then we obtain
    \begin{align}\label{Thm Proof Wasserstein Distance No Taylor}\notag
      W( \mu,\mu^{*} ) &\leq \sum_{j=0}^{2^{n}-1}\left( C_{\mu}'(k_{j+1}+)- C_{\mu}'(k_{j}+)\right) \left( k_{j+1}-k_{j}  \right)
      \\
      & = \frac{K}{2^n}\sum_{j=0}^{2^{n}-1}\left( C_{\mu}'(k_{j+1}+)- C_{\mu}'(k_{j}+)\right) \leq \frac{K}{2^n}.
    \end{align}

    In order to obtain the estimate in \eqref{Thm Eq 3 Wasserstein Distance Estimates}, we use the fact that the slopes get closer and closer when $n$ increases. We assume that $C_{\mu} \in C^2(\R_+)$ and rewrite the right hand side of \eqref{Thm Proof Wasserstein Distance Estimates Case Differentiable}. Then we have
    \begin{align*}
       W( \mu,\mu^{*} ) &\leq \sum_{j=0}^{2^{n}-1}\Bigg[\left( C_{\mu}(k_{j+1})-C_{\mu}(k_{j})- C_{\mu}'(k_{j})(k_{j+1}-k_{j})\right) \left( \frac{\theta_j-k_{j}}{k_{j+1}-k_{j}}  \right)
      \\
      & \quad  + \left(C_{\mu}'(k_{j+1})(k_{j+1}-k_{j})-C_{\mu}(k_{j+1})+C_{\mu}(k_{j})\right)\left( \frac{k_{j+1}-\theta_j}{k_{j+1}-k_{j}}\right)\Bigg].
    \end{align*}
    Now let us use the Theorem of Taylor. In particular, for two times continuously differentiable functions $f:\R \to \R$, we obtain
        \[
             f(x)= f(a)+  f'(a)(x-a)  + \int_{a}^{x}(x-t)f''(t) d t. 
         \]
    If we now apply this formula in the form
    \[
        f(x)-f(a)-f'(a)(x-a)= \int_{a}^{x}(x-t)f''(t) d t =: R_1 f(x,a),
    \]
     for $f \equiv C_\mu$ with $x=k_{j+1}$ and $a=k_j$, and with $x=k_j$ and $a=k_{j+1}$, then we obtain
         \[
        W( \mu,\mu^{*} )\leq 
        \sum_{j=0}^{2^{n}-1}\left( \left( \frac{\theta_j-k_{j}}{k_{j+1}-k_{j}} \right)R_1C_{\mu}(k_{j+1},k_{j})+ \left( \frac{k_{j+1}-\theta_j}{k_{j+1}-k_{j}} \right)R_1C_{\mu}(k_{j},k_{j+1}) \right).
         \]
            The well-known general Taylor residual estimate states that we have
    \[
        |R_1 f(x,a)|\leq \sup_{\xi \in (a-r,a+r)}\left|\frac{f''(\xi)}{2}(x-a)^2\right|
    \]
    for all $x \in (a-r,a+r)$. Choosing $r=\frac{K}{2^n}+\varepsilon$, $\varepsilon>0$, we achieve
    \begin{align*}
      \left|R_1C_{\mu}\left( k_{j+1},k_{j} \right)\right| &\leq \sup_{t \in \left( k_{j-1}-\varepsilon,k_{j+1}+\varepsilon \right)}\left|\frac{C_{\mu}''(t)}{2}\left( k_{j+1}-k_{j} \right)^2 \right|
      \\
      &= \sup_{t \in \left( k_{j-1}-\varepsilon,k_{j+1}+\varepsilon \right)}\left|\frac{C_{\mu}''(t)}{2}\left( \frac{K}{2^n}\right)^{2}\right| \leq T_{\mu}\cdot  K^2 \cdot 2^{-(2n+1)}
    \end{align*}
    and analogously $|R_1C_{\mu}(k_{j},k_{j+1})| \leq T_{\mu}\cdot  K^2 \cdot 2^{-(2n+1)}.$ Thus, we get
    \begin{align*}
      W( \mu,\mu^{*} ) &\leq \sum_{j=0}^{2^n-1}\left(\frac{\theta_j-k_{j}}{k_{j+1}-k_{j}} +  \frac{k_{j+1}-\theta_j}{k_{j+1}-k_{j}} \right) T_{\mu}\cdot K^2 \cdot2^{-(2n+1)}
      \\
      &= 2^n\cdot T_{\mu} \cdot K^2 \cdot 2^{-(2n+1)} = \frac{T_{\mu}\cdot K^{2}}{2^{n+1}},
    \end{align*}
    which is the desired estimate and thus ends the proof.
  \end{proof}

\subsection{Speed of Convergence in Theorem \ref{Thm Convergence Speed General} cannot be improved. }

    In this example, we show that the convergence speed proven in Theorem \ref{Thm Convergence Speed General} is maximal in the sense that it can not be improved in general. For this purpose, we consider the two discrete measures
    \[
          \mu=\frac{1}{4}\delta_{1}+\frac{1}{2}\delta_{\frac{7}{3}}+\frac{1}{4} \delta_{3}\quad  \text{ and } \quad \nu=\frac{1}{4}\delta_{0}+\frac{1}{2}\delta_{\frac{7}{3}}+\frac{1}{4} \delta_{4}.
        \]
        These measures have mass $1$, mean $\frac{7}{3}$ and the call option price functions    \begin{align*}
      C_{\mu}(k)&=\left( \frac{13}{6}-k \right)1_{\{0 \leq k \leq 1\}}+\left( \frac{23}{12}-\frac{3}{4}k \right)1_{\left\{ 1 \leq k \leq \frac{7}{3} \right\}}+\left( \frac{3}{4}-\frac{1}{4}k \right)1_{\left\{ \frac{7}{3} \leq k \leq 3 \right\}},
      \\
      C_{\nu}(\ell)&=\left( \frac{13}{6}-\frac{3}{4}\ell \right)1_{\left\{ 0 \leq \ell \leq \frac{7}{3} \right\}}+\left( 1-\frac{1}{4}\ell \right)1_{\left\{ \frac{7}{3} \leq \ell \leq 4 \right\}}.
    \end{align*}
    We easily see that $C_{\mu}\leq C_{\nu}$ and thus $\mu \leq_c \nu$.  As payoff function we choose $c(x,y)=xy^2$. The upper price bounds can be computed as 
    \[
     \frac{3}{28}c\left( 1,\frac{7}{3} \right) + \frac{11}{28} c\left( \frac{7}{3},\frac{7}{3} \right) + \frac{1}{16} c\left( \frac{7}{3},4 \right) + \frac{3}{16}c(3,4)= \frac{913}{54}.
    \]

    In order to prove the optimality of the convergence speed, we calculate the approximating measures $\mu_{n}^{*}$ and $\nu_{n}^{*}$ and the associated price bounds for general $n \geq 3$. The measures have the structure
    \[
          \mu_n^{*}=\frac{1}{4}(\delta_{1}+\delta_3) + \mu_n^{r} \text{ and } \nu_n^{*}=\frac{1}{4}(\delta_0+\delta_{4}) + \nu_n^{r},
        \]
        for all $n \in \N$, where $\mu_n^{r}$ and $\nu_n^{r}$ are also measures with two atoms close to $\frac{7}{3}$ each. This follows from the determination technique of the approximating measures based on the associated piecewise linearly interpolated call option price functions $C_{\mu_n^{*}} $ and $ C_{\nu_n^{*}}$. Indeed, these deviate from the functions $C_{\mu}$ and $C_{\nu} $ only on the interval $\left( k_{j(n)},k_{j(n)+1} \right)$, where $j(n)$ is such that $k_{j(n)}<\frac{7}{3}< k_{j(n)+1}$. Consequently, $k_{j(n)}$ and $k_{j(n)+1}$ are the atoms of the residual measures.

    We determine the general structure of these values and denote $j=j(n) $ and $k_{j}=k_{j(n)}$ for the rest of the example. We have
    \[
      k_{j}< \frac{7}{3}<k_{j+1} \iff 4 \cdot \frac{j}{2^n} < \frac{7}{3} < 4\cdot\frac{j+1}{2^n} \iff j<\frac{7}{3}\cdot \frac{2^n}{4}< j+1.
    \]
       As $j \in \N$, we clearly have
    \[
      j=\left\lfloor \frac{7}{3} \cdot\frac{2^n}{4} \right\rfloor =
        \begin{cases}
          \frac{7}{3}\cdot\frac{2^n}{4}-\frac{1}{3}, & n \text{ even},
          \\
          \frac{7}{3}\cdot\frac{2^n}{4}-\frac{2}{3}, & n \text{ odd},
        \end{cases}
    \]
        from which we immediately get
    \[
      k_{j}=4\cdot\frac{j}{2^n} = \begin{cases}
         \frac{7}{3} - \frac{4}{3\cdot 2^n}, & n \text{ even},\\
         \frac{7}{3} - \frac{8}{3\cdot 2^n}, & n \text{ odd},
       \end{cases} \quad  \text{ and }\quad  k_{j+1}=4\cdot\frac{j+1}{2^n} = \begin{cases}
         \frac{7}{3} + \frac{8}{3\cdot 2^n}, & n \text{ even}, \\
         \frac{7}{3} + \frac{4}{3\cdot 2^n}, & n \text{ odd}.
      \end{cases}
    \]

    The masses of $\mu_n^{*}$ and $\nu_n^{*}$ in the atoms $\delta_{k_j}$ and $\delta_{k_{j+1}}$ are the differences of the slopes of $C_{\mu_n^{*}}$ and $C_{\nu_n^{*}}$ on the intervals $(k_{j},k_{j+1})$ and $(k_{j-1},k_{j})$, and $(k_{j+1},k_{j+2})$ and $(k_{j},k_{j+1})$ respectively. As $C_{\mu}$ and $C_{\nu}$ have the same slopes in these areas, we know that the masses $  \omega_j^{n}, \vartheta_j^{n}  $ of the atoms are equal for $\mu_n^{*}$ and $\nu_n^{*}$. Hence, we have
    \[
          \omega_j^{n}= \vartheta_j^{n} = m_{j}^{n}-\left( -\frac{3}{4} \right) \text{ and } \omega_{j+1}^{n}= \vartheta_{j+1}^{n} = -\frac{1}{4}- m_{j}^{n},
        \]
        where $m_{j}^{n}=\frac{C_{\mu}(k_{j+1})-C_{\mu}(k_{j})}{k_{j+1}-k_{j}} = \frac{2^n}{4}(C_{\mu}(k_{j+1})-C_{\mu}(k_{j})) = \frac{2^n}{4}(C_{\nu}(k_{j+1})-C_{\nu}(k_{j})) .$ Using the representations of $C_{\mu}$ and $C_{\nu}$, we deduce
    \begin{align*}
      m_{j}^{n}&=\frac{2^n}{4}\left( 1-\frac{1}{4}k_{j+1}-\left( \frac{13}{6}-\frac{3}{4}k_{j} \right) \right)
      \\&= \frac{2^n}{4}\left( -\frac{7}{6}-\frac{1}{4}(k_{j+1}-k_{j})+\frac{1}{2}k_{j} \right)
      \\
      &= \frac{2^n}{4}\left( \frac{1}{2}k_{j}-\frac{7}{6} \right)-\frac{1}{4}
      \\
       &=
      \begin{cases}
        \frac{2^n}{4}\left( \frac{1}{2}\left( \frac{7}{3}-\frac{4}{3\cdot 2^n}-\frac{7}{6} \right) \right)-\frac{1}{4}=   -\frac{5}{12} , & n \text{ even},
        \\
        \frac{2^n}{4}\left( \frac{1}{2}\left( \frac{7}{3}-\frac{8}{3\cdot 2^n}-\frac{7}{6} \right) \right)-\frac{1}{4}=  -\frac{7}{12} , & n \text{ odd.}
       \end{cases}
    \end{align*}
   This finally implies
    \[
      \omega_j^{n}=\vartheta_{j}^{n}=\begin{cases}
         \frac{1}{3}, & n \text{ even}, \\
         \frac{1}{6}, & n \text{ odd},
      \end{cases} \quad  \text{ and }  \quad \omega_{j+1}^{n} = \vartheta_{j+1}^{n}=\begin{cases}
        \frac{1}{6}, & n \text{ even}, \\
         \frac{1}{3}, & n \text{ odd}.
      \end{cases}
    \]

    In total, we have the general structure
      \begin{align*}
      \mu_n^{*}&=\frac{1}{4}(\delta_{1}+\delta_3) + \frac{1}{6}\left( \delta_{k_{j}}+\delta_{k_{j+1}} \right) + \frac{1}{6}\left( \delta_{k_j}1_{\left\{ n \text{ even} \right\}} + \delta_{k_{j+1}}1_{\left\{ n \text{ odd} \right\}} \right),
      \\
     \nu_n^{*}&=\frac{1}{4}(\delta_{0}+\delta_4) + \frac{1}{6}\left( \delta_{k_{j}}+\delta_{k_{j+1}} \right) + \frac{1}{6}\left( \delta_{k_j}1_{\left\{ n \text{ even} \right\}} + \delta_{k_{j+1}}1_{\left\{ n \text{ odd} \right\}} \right).
      \end{align*}

    By construction we have $\mu_{n}^{*} \leq_c \nu_{n}^{*}$. 
The approximate upper price bound is then given by
     \begin{align*}
      & c\left(1,k_j\right)\cdot \frac{1}{4k_j}+ c\left(k_j,k_j\right)\cdot\left(\frac{1}{3}-\frac{1}{4k_j}\right)+c\left(k_j,k_{j+1}\right)\cdot\frac{1}{4 k_{j+1}}
        \\&+ c\left(k_{j+1},k_{j+1}\right)\cdot\left(\frac{1}{6} - \frac{1}{4 k_{j+1}}\right) +c\left(k_{j+1},4\right)\cdot\frac{1}{16}+ c\left(3,4\right)\cdot\frac{3}{16}
        \\ =& \frac{k_j}{4}+\frac{k_j^{3}}{3}-\frac{k_j^{2}}{4} + \frac{k_j\cdot k_{j+1}}{4}+ \frac{k_{j+1}^3}{6}-\frac{k_{j+1}^{2}}{4} + k_{j+1}+9.
       \end{align*}
       Plugging in the derived representations of $k_j$ and $k_{j+1}$, we get
       \begin{align*}
     &9 + \frac{7}{12}-\frac{1}{3\cdot 2^n} + \frac{7}{3}+\frac{8}{3\cdot 2^n}+ \frac{k_{j}^{3}}{3} + \frac{k_{j+1}^{3}}{6} + \frac{k_j \cdot k_{j+1}- k_{j}^2 - k_{j+1}^2}{4}.
        \end{align*}
        If we now also plug in the representations for the higher degree terms and rearrange the former, then we achieve
        \begin{align*}
  & \frac{143}{12} + \frac{7}{3 \cdot 2^n}+ \frac{1}{3}\left(\left(\frac{7}{3}\right)^{3}-\left( \frac{7}{3}\right)^{2}\frac{4}{3 \cdot 2^n} + \frac{7}{3} \cdot \frac{16}{9 \cdot 2^{2n}}- \frac{64}{27 \cdot 2^{3n}}\right)
        \\&+ \frac{1}{6}\left(\left(\frac{7}{3}\right)^{3}+\left( \frac{7}{3}\right)^{2}\frac{8}{3 \cdot 2^n} + \frac{7}{3} \cdot \frac{64}{9 \cdot 2^{2n}}+ \frac{512}{27 \cdot 2^{3n}}\right)
        \\ &+ \frac{1}{4}\bigg(\frac{49}{9}+\frac{28}{9 \cdot 2^{n}}-\frac{32}{9 \cdot 2^{2n}} -\left(\frac{49}{9} - \frac{56}{9 \cdot 2^{n}} + \frac{16}{9 \cdot 2^{2n}}\right)
        \\
        & \quad - \left(\frac{49}{9}+\frac{112}{9 \cdot 2^n} + \frac{64}{9 \cdot 2^{2n}}\right)\bigg)
        \\ =& \frac{913}{54} + \frac{84}{54 \cdot 2^n} + \mathcal{O}\left(\frac{1}{2^{2n}}\right).
     \end{align*}
     Analogously, for $n\geq 3$ odd, we have
     $$ \frac{913}{54} + \frac{78}{54 \cdot 2^n} + \mathcal{O}\left( \frac{1}{2^{2n}} \right).$$
     This is indeed the convergence speed from Theorem \ref{Thm Convergence Speed General}.

\end{document}